\documentclass[pmlr]{jmlr}% new name PMLR (Proceedings of Machine Learning Research)

\usepackage{longtable}% for long tables

\usepackage{booktabs}
\usepackage[load-configurations=version-1]{siunitx} % newer version
\newtheorem{customtheorem}{Theorem}
\newtheorem{customdefinition}{Definition}
\newtheorem{customcorollary}{Corollary}

\newtheorem{customlemma}{Lemma}

\title[PAC Learning PDFA]{PAC learning PDFA from data streams}

\author{\Name{Robert Baumgartner} \Email{r.baumgartner-1@tudelft.nl}\\ 
   \Name{Sicco Verwer} \Email{s.e.verwer@tudelft.nl}\\
   \addr Department of Software Technology \\
         Delft University of Technology \\
         Delft, The Netherlands \\
   }
\begin{document}

\maketitle

\begin{abstract}
This is an extended version of our publication \textit{Learning state machines from data streams: A generic strategy and an improved heuristic, International Conference on Grammatical Inference (ICGI) 2023, Rabat, Morocco} \citep{icgi_version}. It has been extended with a run-time and memory analysis, a formal proof on PAC-bounds\footnote{An accessible overview of the main proof ideas is available at \\ \url{https://robba.github.io/pac-bounds}} in Section \ref{sec:analysis}, and the discussion and analysis of a similar approach has been moved from the appendix and now has a full dedicated section (\ref{sec:space_save_discussion}).

\vspace{\baselineskip}

\noindent State machine models are models that simulate the behavior of discrete event systems, capable of representing systems such as software systems, network interactions, and control systems, and have been researched extensively. The nature of most learning algorithms however is the assumption that all data be available at the beginning of the algorithm, and little research has been done in learning state machines from streaming data. In this paper, we want to close this gap further by presenting a generic method for learning state machines from data streams, as well as a merge heuristic that uses sketches to account for incomplete prefix trees. We implement our approach in an open-source state merging library\footnote{Available at \url{https://github.com/tudelft-cda-lab/FlexFringe}} and compare it with existing methods. We show the effectiveness of our approach with respect to run-time, memory consumption, and quality of results on a well known open dataset. Additionally, we provide a formal analysis of our algorithm, showing that it is capable of learning within the PAC framework, and show a theoretical improvement to increase run-time, without sacrificing correctness of the algorithm in larger sample sizes.
\end{abstract}
\begin{keywords}
State machine learning, Automata learning, Streaming data, PAC-Learning
\end{keywords}

\thispagestyle{empty} % Removes header/footer from the first page

\section{Introduction}
\label{sec:intro}

State machines are insightful models that naturally represent formal languages and discrete systems, and have been extensively used in various domains such as model checking \citep{model_checking_book} and modeling discrete event systems such as truck driving \citep{sicco_thesis}, computer networks \citep{pellegrino_2017}, and controllers and software systems \citep{walkinshaw_2016}. A major advantage of state machines is that, combined with expert knowledge, they are interpretable \citep{chris_interpreting}. Learning state machines can be roughly subdivided into active learning and passive learning. Active learning learns from interactions with a system under test. Passive learning learns directly from observed execution traces without interfering with a system's execution. Most passive algorithms require all data to be available before running the inference step and hence cannot learn from data streams. Exceptions are the works of \cite{balle_2012, balle2014} and \cite{schmidt2014online}. These works employ a typical streaming approach where decisions made by the learning algorithm are postponed until sufficient data has been observed. In this work, we propose a novel streaming strategy, with its main advantage being that it corrects errors made during previous iterations. In this way, we can perform learning steps even when the statistical tests performed by the algorithm are inconclusive: subsequent iterations can correct these steps. As a consequence, we learn faster. Our streaming strategy is generic and can be used with multiple heuristics.

Like most passive learning algorithms, we adopt state-merging in the red-blue framework from \cite{lang_1998}. This framework starts with a prefix-tree that directly encodes the entire input data set and then greedily merges states of this tree until no more consistent merges can be performed. Being a streaming algorithm, our method only keeps counts from all observed traces in memory and builds the prefix tree only for states with sufficient occurrence counts. To learn models from such a partially specified prefix tree, we propose a new Count-Min-Sketch based merge heuristic. The main idea is that we hash future prefixes and store them in sketches in each state. Our new heuristic performs consistency checks directly on these sketches. Importantly, Count-Min-Sketches allow for efficient updates which allows performing and undoing of merges in constant time.

We implemented our approach in the open source FlexFringe library \citep{flexfringe}, a flexible state-merging framework that allows to specify one's own heuristics and consistency checks. We added streaming capability and our own consistency check using Count-Min-Sketches. FlexFringe provides efficient structures for performing and undoing merges, as well as a multitude of heuristics such as Alergia~\citep{alergia_1994} that we use in our experiments. We demonstrate the effectiveness of our approach on the well known PAutomaC dataset. We evaluate on run-time, memory footprint and approximation quality. We experiment in settings where merge undoing is enabled, and the more common streaming setting where merges are postponed until sufficient data is available. In our experiments our method clearly outperforms the earlier approach of \cite{balle2014} in terms of approximation quality. It also compares favorably to streaming using Alergia as the heuristic without merge undoing and performs better on most problems when undoing merges is enabled, albeit the gap has shrank compared to the not-undoing setting. The results clearly demonstrate that the ability to correct mistakes is crucial for obtaining good performance using traditional merge heuristics such as Alergia. The Count-Min-Sketches result in improved consistency checks when postponing merges, but even these benefit from merge undoing. Our method gets close to but does not reach the performance of a non-streaming version of Alergia that simply loads all the available data in the prefix tree.

\section{Related Work}\label{section:related_work}

There are two main approaches to learning state machines, either by active learning \citep{lstar, queries, vaandrager2017model}, or via state merging \citep{learning_grammars}. While active learning requires the availability of an oracle being able to answer queries, state merging learns from example traces. Example traces are being represented via a complete tree, and then greedily minimized in accordance to Occam's razor. These all perform a consistency test of where a pair of state is mergeable and compute a heuristic value to determine which merge to perform first. For example, Alergia \citep{alergia_1994} learns state machines representing probabilistic languages via sampled distributions, and k-Tails~\citep{ktails} compares the subtrees of state pairs. Other seminal works include the  RPNI algorithm \citep{rpni_paper} and the EDSM algorithm \citep{lang_1998}. Several types of models can be learned like this. \cite{verwer_rti} learn timed automata from timed strings, \cite{walkinshaw_2016} learn extended finite state machines to represent software and control systems, and \cite{gkplus} learn guarded finite state machines. Hybrid state machine models can be learned by taking different aspects of a system into account \citep{vodencarevic_2011, lin2018moha}. Since the problem of finding an exact solution in passive learning has been shown to be NP-hard \citep{complexity}, several search strategies have been developed \citep{search1, search2, search3}. Different ways to speed up the main algorithm have also been proposed through divide and conquer and parallel processing \citep{luo2017inferring, akram2010psma, shin2021prins}.

Few works deal with making state-merging algorithms more scalable when run on streaming data. \cite{schmidt2014online} utilize frequent pattern data stream techniques to tackle the problem. \cite{rpni2paper} build on the RPNI algorithm and learn state machines from positive and negative examples in an incremental fashion. Conflicts that can arise from new unseen negative examples are resolved via splitting states until conflicts are resolved. \cite{balle_2012} present a theoretical work of streaming state machines using modified space saving sketches \citep{metwally}, and extend it with a parameter search strategy in \citep{balle2014}. \cite{schouten2018} implements a streamed merging method in the Apache framework, also using the Count-Min-Sketch data structure \citep{count-min-sketch}. In this paper, we present a new streaming learning method that, like the algorithm of \cite{rpni2paper}, can correct mistakes from earlier iterations, and like the algorithm of \cite{balle_2012} uses sketches to approximate the information contained in the observed traces. 

\section{Background}\label{section:background}

A PDFA is a tuple defined by $\mathcal{A} = \{\Sigma, Q, q_0, \tau, \lambda, \eta, \xi\}$, where $\Sigma$ is a finite alphabet, $Q$ is a finite set of states, $q_0 \in Q$ is a unique starting state, $\tau : Q \times \Sigma \cup \{ \zeta \} \to Q$ is the transition function with $\zeta$ the empty string, $\lambda : Q \times \Sigma \to \left[0,1\right]$ is the symbol probability function, and $\eta : Q \to \left[0,1\right]$ is the final probability function, such that $\eta(q) + \sum_{a \in \Sigma} \lambda(q,a) = 1$ for all $q \in Q$. $\xi \not \in \Sigma$ denotes the final symbol, indicating the end of a sequence. 

In the following we will stick to the following convention: We denote the Kleene star operation over $\Sigma$ by $\Sigma^*$, and denote $x\Sigma^*$ the set of all possible strings with prefix $x$. When we decompose a string into multiple parts or when referring to general strings over $\Sigma^*$ we use the letter $\sigma$. Single elements of the alphabet are denoted by the character $a$, s.t. $a \in \Sigma$. For example, the string $\sigma_1 a \sigma_2$ is the string obtained through the concatenation of string $\sigma_1$, the symbol $a$ and the string $\sigma_2$. Given string $\sigma=a^1 a^2 \ldots a^n$, we call the sequence $a^{i+1} a^{i+2} \ldots a^{i+m}$, $i \in [0, n-m]$ a substring of length $m$ of $\sigma$. Given transition function $\tau$, a string traverses the PDFA recursively in order $\tau(q_i, a^{i+1}) = q_{i+1}$ for all $0 \leq i \leq n-1$. For convenience we define a traversal through an entire string or substring $\sigma$ via shorthand notation $\tau(\sigma)$. We consider a parent of a state $q$ a state $q'$ s.t. $\exists a \in \Sigma: \tau(q', a)=q$. In the prefix tree each node has exactly one parent.

The probability $P( \sigma)$ of a string $\sigma = a^1, \ldots, a^n$ can be computed via $P( \sigma) = \lambda(q_0,a^1) \cdot \lambda(q_1, a^2) \cdot \ldots \cdot \lambda(q_{n-1}, a^n) \cdot \eta(q_n)$. In this work we model sequential information via sampled distributions over strings emanating from a state $q_i$ via $D_{q_i}(\sigma), \sigma=a^{i+1} a^{i+2}...$, where $D_q(\sigma)\rightarrow \left[0, 1\right]$ models a sampled distribution. We call a (sub-)string $\sigma_2$ an outgoing string from state $q$ if $\exists \sigma' = \sigma_1\sigma_2 : \tau(\sigma_1)=q$. We further define the size $s_q$ of a given node $q$ by the number of input strings $\sigma \in I$ from input $I$ during the learning that traverse $q$, i.e. $n_q = |\{ \sigma \in I | \sigma=a_1a_2...a_{|\sigma|} \wedge \exists i \in [1, |\sigma|]: \tau(q_{i-1}, a^i)=q \}|$. The size of the root node $q_0$ is $n_{q_0}=|\{ \sigma \in I \}|$.

PDFAs can be learned via state-merging. In a first step, a state-merging algorithm constructs a prefix tree representing the input data completely, meaning that every state in this tree has only one unique access sequence. Algorithms differ in how they minimize this tree. Usually heuristics are employed to greedily merge equivalent state pairs $(q, q')$. In each iteration the heuristic will assign a score $\phi$ to every mergeable state pair (\textit{candidate pair}), and the merge with the highest score will be performed. After merging two states into one, a subroutine is started such that $\tau$ is deterministic, i.e. that there is only one possible transition $\tau(q', a')$ for all $q'\in Q \text{, } a' \in \Sigma$. Multiple search strategies exist to limit the search of state pairs. In this work we stick close to the red-blue-framework \citep{lang_1998}. The first red state is the root node of the prefix tree, and blue states are all states emanating from red states. Only pairs of one red and one blue state can be merged, and the state resulting from a merge is always red. In case no merge can be performed, the blue state $q$ with the largest size $s_q$ will be turned red. Note that in this framework nodes that are not red always have exactly one parent, and the parent of a blue node is always red.

\section{Methodology}\label{section:methodology}

\subsection{Merge heuristic}\label{subsec:heuristic}

The core idea of our merge heuristic is to store the counts of outgoing strings of each state. Because this quantity can become very large in data streams we use the Count-Min-Sketch (CMS) \citep{count-min-sketch} data structure, and we equip each state with one such CMS. We store each state's outgoing strings in its sketch. Inserting elements and retrieving them can be done in time $\mathcal{O}(1)$, and only depend on the size of the sketches. Two CMS can be considered as matrices, and states can easily be merged and unmerged via matrix addition and subtraction. To compare two sketches we retrieve the counts of all seen strings $\sigma$ and model the distribution over those as frequencies. We can then compare these two distributions via statistical tests. A problem that arises is that the number of possible strings grows $\mathcal{O}(|\Sigma|^{F_s})$, where we denote as $F_s$ the maximum length of the strings that we consider (a hyperparameter). We propose two solutions to tackle this problem: The first one by constructing multiple sketches per state, one for each possible size of string. In this setting, the first sketch would only store strings of size $1$, the second sketch strings of size $2$, and so on. The second solution is concerning the hash-function that is utilized in the CMS. We describe the heuristic in more detail in the following subsection.

\subsubsection{Count-Min-Sketches and the heuristic}

We consider two states behaving similar if their multiset of outgoing strings is similar. We consider the regular set of outgoing strings of a state $q$ the set $\{ \sigma \in \Sigma^* | \exists x \in \Sigma^* \wedge \sigma'=\sigma: \tau(x)=q \}$. The multiset of outgoing strings of state $q$ is simply the respective multiset, that also stores the number of times each of the elements has been attempted to be inserted into the set. We denote the count of an arbitrary element $y$ via $c_y$.

Given state $q$, we assume a sampled distribution $D_q: \sigma \rightarrow [0, 1]$, again with $\sigma \in \Sigma^*$\footnote{Recount that the Kleene-star operation applied to $\Sigma$, i.e. $\Sigma^*$, denotes the set of finite strings. Hence the set of all $\{ \sigma \in \Sigma^* \}$ is a finite set and we can obtain discrete probabilities for the occurrence of each string.}. For us, $D_q(\sigma_i)$ is simply the frequency of string $\sigma_i$, that is $D_q(\sigma_i)=\frac{c_{\sigma_i}}{n_q}$. Because the set $\Sigma^*$ can be potentially very large, we approximate $D_q$ for each state via a variant of the Count-Min-Sketch (CMS) data structure \citep{count-min-sketch}, which is why we call our heuristic CSS (\underline{C}MS-based \underline{S}pace \underline{S}aving). 

Formally, a CMS is a probabilistic data structure to summarize data streams. In practice a CMS is a matrix represented by $d=\left\lceil{\ln \frac{1}{\gamma}}\right\rceil$ rows and $w=\left\lceil{\frac{e}{\beta}}\right\rceil$ columns, where $e$ is the basis to the natural logarithm $\ln$. Given counts $c_{y_i}$ of elements $y_i$ of a given set of possible events $\mathcal{Y}$, the CMS is able to store the counts and retrieve them using $\mathcal{O}\left( \frac{1}{\beta} \right)$ space and $\mathcal{O} \left( \frac{1}{\gamma} \right)$ time. To do so each row of the CMS is associated with a hash-function $h$ mapping elements $y$ of set $\mathcal{Y}$, $y \in \mathcal{Y}$,  to $h: y\rightarrow \mathbb{N} \cap [0, w-1]$. That means in total there exist $d$ hash functions, and we want them to be i.i.d. chosen from a pairwise-independent hash-family $\mathcal{H}$ \citep{count-min-sketch}. We denote $h_j$ the hash function associated with row $j$, and for convenience we define an inverse mapping $h_j^{-1}(y)$. At initialization, each entry of the CMS is set to zero. To store an incoming element $y$ the CMS hashes $y$ for each row $j$ and increments the row at $h_j(y)$ by 1. Retrieving an approximated count of a given element $y$ works via again hashing the $y$ once per row. The approximated count is the minimum $\min_{j \in [1, d]} h_j^{-1}(y)$. The CMS is then able to retrieve an approximated count $\hat{c}_{y_i}$ of $y_i$ via the $retrieve(y_i)$ operation. The error bound is $\hat{c}_{y_i} \leq c_{y_i} + \beta \sum_i c_{y_i}$, which holds with probability at least $1-\gamma$. 

As already mentioned we count strings $\sigma \in \Sigma^*$, and we denote the count of string $\sigma_i$ by $c_{\sigma_i}$ and its approximated count by $\hat{c}_{\sigma_i}$. In addition to conventional CMS our data structure one extra attribute and two extra operations. The extra attribute is a counter for the final counts $\xi$, indicating that a sequence did end in the previous state. Our sketches further support a $+$ and a $-$ operation, which we define as the sum and subtraction of two matrices of equal size, and the same for the final counts. These two operations will be used to perform and undo merges. In order test whether two states behave similar we perform the statistical test from the works of \cite{alergia_1994}, which they use in their own Alergia-algorithm. This statistical test, herein after referred to as Alergia-test, is used to check whether two sampled distributions are similar. State-pairs that  pass this test get assigned a score $\phi$. To this end we use the Cosine-similarity. Given two vectors  $v_1$ and $v_2$ the Cosine-similarity measures the angle in between the two vectors via 

\begin{equation}
    cosine\_similarity(v_1, v_2) = \frac{v_1 \cdot v_2}{||v_1||_2 ||v_2||_2},
\end{equation}

where $v_1 \cdot v_2$ is the dot-product of the two vectors, and $||v_i||_2$ denotes the L2-norm of vector $v_i$.  We show the entire subroutine including the Alergia-test in \algorithmref{alg:consistency}. In this subroutine $\alpha$ is a hyperparameter to be set before the start of the learning algorithm. It represents a bound on the probability of a wrong rejection, which is upper bounded with $2\alpha$ (a wrong rejection means that the two distributions are similar, but the test deems them dissimilar).

\begin{algorithm2e}[t]
  \label{alg:consistency}
  \caption{$Consistency-routine$}
  \KwIn{\\
  $CMS_{q_1}$, $CMS_{q_2}$: Count-Min-Sketches of state $q_1$ and $q_2$ respectively \\ 
  $n_{q_1}$, $n_{q_2}$: Size of state $q_1$ and $q_2$ respectively \\
  $S$: The set of all strings $x$ observed so far.
  $\alpha$: Hyperparameter to be set.
  }
  \KwOut{Boolean value indicating consistency, score $\phi$ if applicable}

  $v_1$, $v_2 \leftarrow$ empty lists\\
  \ForEach{$\sigma \in S$}{
    $\hat{c}_{\sigma_{q_1}} \leftarrow CMS_{q_1}.retrieve(\sigma)$\\
    $\hat{c}_{\sigma_{q_2}} \leftarrow CMS_{q_2}.retrieve(\sigma)$\\
    \uIf{$\left| \frac{\hat{c}_{\sigma_{q_1}}}{n_{q_1}} - \frac{\hat{c}_{\sigma_{q_2}}}{n_{q_2}} \right| > \sqrt{\frac{1}{2}log\left(\frac{2}{\alpha}\right)}\left(\frac{1}{\sqrt{n_{q_1}}}+\frac{1}{\sqrt{n_{q_2}}}\right)$}{
      \Return{false}\\
    }
    $v_1.append(\hat{c}_{\sigma_{q_1}} / n_{q_1})$, $v_2.append(\hat{c}_{\sigma_{q_2}} / n_{q_2})$\\
  }
  \Return{$true$, $cosine\_similarity(v_1, v_2)$} \\
\end{algorithm2e}

\subsubsection{The runtime problem}\label{sec:hashing}

In practice we have to bound the size of outgoing strings of states, because they can be arbitrarily long. We denote the maximum length of an outgoing (sub-)string for any state $q \in Q$ by $F_s$. The run-time problem then is that the size of set $S$ from \algorithmref{alg:consistency} does grow with $\mathcal{O}(|\Sigma|^{F_s})$. We propose two solutions to overcome this problem. The first one is a simple decoupling. Instead of storing all strings in one sketch, we construct $F_s$ sketches per state, namely $C_1, C_2, \ldots C_{F_s}$, and each of them stores one size of string in ascending order: (sub-)strings of length $1$ $\rightarrow C_1$, (sub-)strings of length $2$ $\rightarrow C_2$, ..., (sub-)strings of length $F_s$ $\rightarrow C_{F_s}$. Consistency checks are then performed in ascending order on those CMS, and higher order CMS are only checked if all lower order CMS passed the test already. The score $\phi$ is simply the average of all scores $\phi_1 \ldots \phi_{F_s}$ returned by those checks. \figureref{fig:sketch_future} illustrates an example, where state $S13$ stores sketches with $F_s=3$.

While this helps us reduce runtime in practice at the cost of some memory, the worst case run-time remains at $\mathcal{O}(|\Sigma|^{F_s})$. This can practically be a problem on data where a large $F_s$ is required to distinguish states properly. In this case we propose a second solution.  We adopt the ideas introduced by Locality Sensitive Hashing (LSH) \citep{original_lsh, p_stable_lsh}. Intuitively, LSH hashes elements of some domain $W$ of dimensionality $r_1$ into another domain $U$ of dimensionality $r_2 < r_1$ while preserving a distance measure of choice. The hashing-algorithm is based on the chosen distance measure. Because we are dealing with set of discrete symbols $a\in \Sigma$ we choose to preserve the Jaccard similarity of two strings $\sigma_1$ and $\sigma_2$, denoted by $Jaccard(\sigma_1, \sigma_2)$, and hash the strings using the MinHash-algorithm \citep{min_hash_paper, lsh_lecture}. Taking a hash-function $h_i$ from hash-family $\mathcal{H} = \{h: \mathcal{W}\rightarrow \mathcal{U}\}$ we obtain $P(h_i(\sigma_1) = h_i(\sigma_2)) = Jaccard(\sigma_1, \sigma_2)$, where $Jaccard(\sigma_1, \sigma_2) = \frac{|{\{ \sigma_1 \} \cap \{ \sigma_2 \}}|}{|\Sigma|}$. In this notation we use the shorthand-notation $\{ \sigma_i \}$ to denote the set $\{ a'\in \Sigma | \sigma_i = a^1a^2 \ldots a^n \text{, } \exists j \in [1, n]: a^j=a' \}$. In other words, the more similar two strings, the more likely they are being hashed into the same bin. We introduce a new hyperparameter $l_{m}$, and define $l_{m}$ mutually-independent hash-functions $h_i, i\in [1, l_{m}]$. Because our goal is to reduce run-time we only need to hash strings with length larger than $l_{m}$, hence the run-time is then $\mathcal{O}(|\Sigma| + |\Sigma|^{l_m})$, where the $|\Sigma|$ term stems from the maximum number of attempts it takes to hash a string of length $F_S$. \figureref{fig:css_minhash_explained} shows the approach with a hashing down to size $2$. We herein after call this approach \textit{CSS-MinHash}.

\begin{figure}[h]%[H]%[htbp]
    \centering
    \subfigure[An example of how the futures are stored within the sketches. The numbers are purely fictional.]{\label{fig:sketch_future}
      \includegraphics[width=0.45\textwidth]{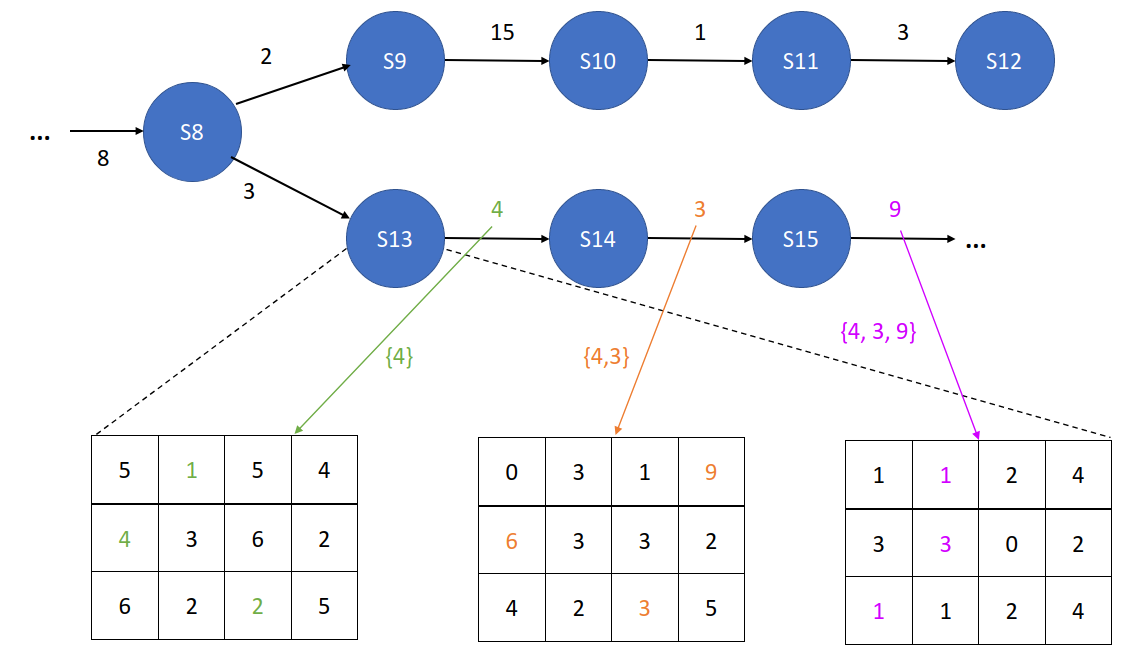}}
    \qquad
    \subfigure[Illustration of CSS-MinHash. Here we hash down to a size of $2$.]{\label{fig:css_minhash_explained}
      \includegraphics[width=0.45\textwidth]{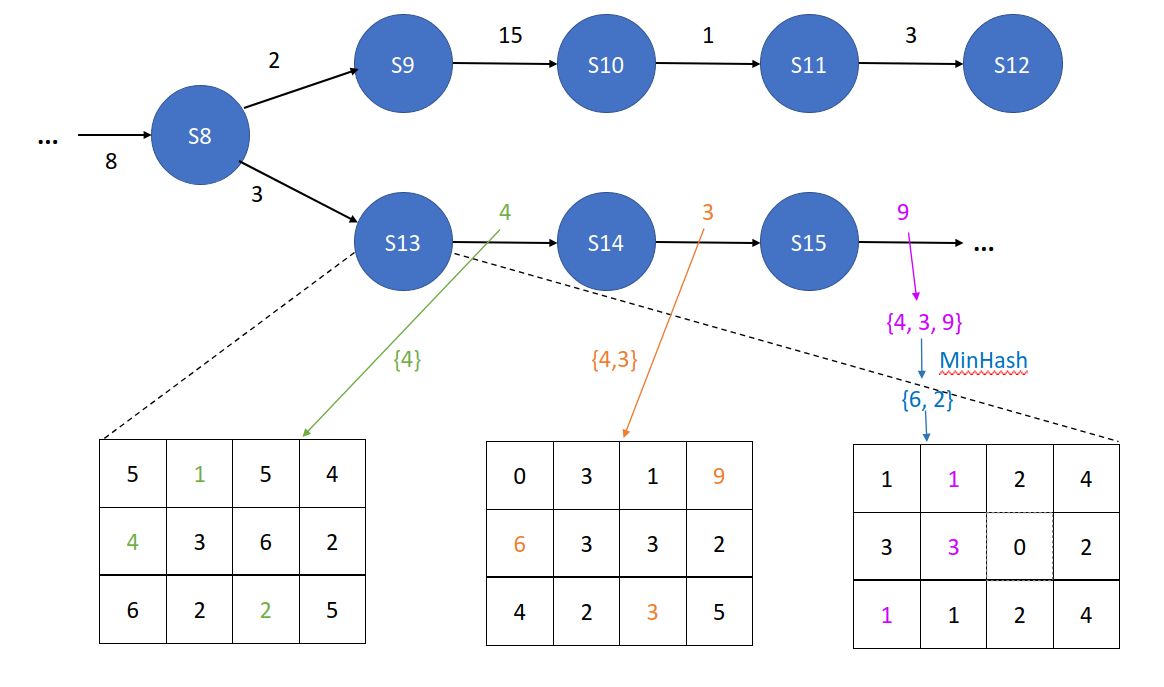}}
    \caption{The two solutions to the problem of uniform distributions in the sketches.}
    %\label{fig:csm_count}
\end{figure}

\subsection{Streaming}\label{subsec:streaming}

Our streaming algorithm consists of two main ideas. The first one is that we discard information that does not occur often enough. This is common practice in streaming algorithms. To do so we use the red-blue framework and include a threshold $t_S$, where states can only be created with red or blue states as parent nodes, and a state's size $n_q$ has to exceed $t_S$ to be able to become a blue node. The second idea is to undo and redo merges. In our approach we divide the incoming data stream into batches of size $B$. After reading $B$ sequences from the stream, we first perform a greedy minimization step. We keep track of all the operations (herein after called \textit{refinements}) we performed on the prefix tree. A refinement is either the identification of a new red state or the merge of a state-pair. Once we cannot perform refinements anymore, we save the resulting automaton and reverse all the refinements we did, starting anew. In the next iteration we will first try to perform all the refinements from the previous iteration again, and then perform greedy minimization. 

Because we can logically separate the two main steps we perform, namely the batch-wise streaming of the prefix tree and the subsequent minimization routine, we have split those up into two algorithms. We describe the two steps individually in the next two subsections.

\subsubsection{Streaming the prefix tree}

Streaming the prefix tree starts with the root node marked as a red state. We read in $B$ elements of the data stream, and create states emanating from the root node. Those states are neither red nor blue, unless they have been accessed by an incoming sequence at least $t_S$ times, in which case we mark them blue. States can only be created emanating from red or blue states, thus in each iteration the fringe of the prefix tree can grow at a maximum of two more states from each existing state. We do this to only save relevant information of the common behavior of the system. Every time a node $q$ gets traversed by a string $\sigma \in I$ we update it with data. This operation depends on the merge heuristic that we chose at the start of the program. In case of our CSS-heuristic we update the node with its outgoing strings.

Once a batch is completed a minimization routine, described in section \ref{sec:minimization_routine}, will start. The minimization routine returns a hypothesis automaton and marks all the states that have been red or blue as a result of the minimization routine. We then start reading in another batch as described before, and the minimization routine starts again. The algorithm is depicted in \algorithmref{alg:streamingtree}.

\begin{algorithm2e}[t]
  \label{alg:streamingtree}
  \caption{Streaming the tree}
  \KwIn{Stream of sequences $I$, batch-size $B$, threshold $t_S$, upper bound on states $n$}
  \KwOut{Hypothesis automaton $H$}
  
  $H \leftarrow$ root node\\
  $c \leftarrow 0$\\
  $R_{old}\leftarrow$ empty queue\\
  \ForEach{$\sigma_i \in X$}{
    $\sigma_i \leftarrow \sigma_i\xi$ \\
    $q \leftarrow$ root node\\\
    $n_q \leftarrow n_q + 1$\\
    update $q$ with relevant data\\
    \ForEach{$j \in len(\sigma_i)$}{
        $a^j \leftarrow \sigma_i[j]$\\
        \uIf{transition from $q$ with $a^j$ exists}{
            $q \leftarrow \tau(q, a^j)$\\
        }
        \uElseIf{$q$ is red or $q$ is blue}{
            create new node $q'$\\ 
            set $\tau(q, a_j)=q'$\\
            $q \leftarrow q'$\\
        }
        \uElseIf{$q.parent$ is red and $n_q \geq t_S$}{
            mark $q$ blue\\
        }
        \uElse{
            break inner loop \\
        }
        %\uEndIf
        $n_q \leftarrow n_q + 1$\\
        update $q$ with relevant data\\
    }
    $c \leftarrow c + 1$\\
    \uIf{$c == B$}{
        $R_{old} \leftarrow perform\_minimization\_routine(H, R_{old})$\\
        $c\leftarrow 0$\\
    }
    \uIf{size of last $H = n$ \text{or} I empty}{
        $perform\_minimization\_routine(H, R_{old})$\\
        \Return $H$\\
    }
  }
\end{algorithm2e}

\subsubsection{Minimization routine}\label{sec:minimization_routine}

The core idea of our minimization routine is that, depending on the implementation, undoing and performing refinements again can be done cheaply. Performing merges and undoing them can be done cheaply in constant run-time in Flexfringe \citep{flexfringe}. The advantage of undoing and redoing refinements is that the learner gets the chance to correct mistakes earlier. This way we can return a model earlier without compromising its future correctness. The first time we start the minimization routine it performs the normal minimization routine described in section \ref{section:background}. What is new is that we store all the refinements that are performed in this step in a queue $R_{new}$. We save the found automaton as hypothesis $H$ and undo all refinements that we did. At the end of the subroutine we  we save $R_{new}$ as another queue $R_{old}$ and return to the prefix-tree streaming subroutine.

From the second time onward, when we enter the minimization subroutine we will start with $R_{old}$ and retry every refinement we performed in the previous run. Assuming that the underlying distribution of strings from the data stream did not change most of these refinements will still hold, hence this approach saves us run-time. The way we deal with the case that a refinement is not possible anymore is based on the following observation. We identified two main causes why a refinement cannot take place anymore:
\begin{enumerate}
    \item Consistency-failure: The underlying structure of the node has changed according to the merge heuristic. It could be for instance that the distribution of strings that passed it changed significantly.
    \item Structural-failure: The underlying structure of the tree has changed. This can be caused by previous consistency failures, in which case the sub-tree might change.
\end{enumerate}

While consistency-failures cannot be fixed in this batch, we follow the intuition that refinements with structural failures can often still hold after obtaining an appropriate tree structure. As an example, say that an earlier performed merge refinement of two states $q_1$ and $q_2$ cannot be performed, because state $q_2$ is neither red nor blue anymore. We then continue the minimization procedure and later reconsider what to do with $q_2$. If at that time $q_2$ is blue, we perform the merge if it is valid. The advantage of this strategy is that it avoids having to recompute consistency checks where unnecessary. Compared with structural checks, which can be done in a simple flag check, consistency check are much more expensive. Our strategy works then as follows. 

We first test on structural failures on each refinement $r\in R_{old}$. In case a structural failure did not happen, we test on a consistency failure. We perform the refinement if none of those two failures did occur. If a consistency failure occurred we discard $r$ and test for the next refinement from $R_{old}$. If however only a structural failure occurred, we push $r$ onto another queue $R_{failed}$. Once we've exhausted $R_{old}$ we perform the greedy minimization again. This time however, each time a new refinement is identified we iterate once over $R_{failed}$ and test again for both failure modes. If no failure can be detected we perform the refinement. Just as in the first iteration we store all the refinements performed in $R_{new}$ and save it as $R_{old}$ at the end of the subroutine. The whole procedure is described in \algorithmref{alg:mergestrategy}.
\begin{algorithm2e}[t]
  \label{alg:mergestrategy}
  \caption{Minimization routine}
  \KwIn{Current hypothesis $H$, queue $R_{old}$}
  \KwOut{queue $R_{new}$}

  $R \leftarrow$ empty stack\\
  $R_{failed}$, $R_{new} \leftarrow$ empty queue\\
  \While{$R_{old}$ not empty}{
    $r\leftarrow R_{old}.pop()$\\
    \uIf{$r$ possible via structure and via consistency}{
        perform $r$ on $H$\\
        $R.push(r)$\\
        $R_{new}.push(r)$\\    
    }
    \uElseIf{$r$ possible via structure}{
        $R_{failed}.push(r)$\\
    }
  }
  \While{H contains at least one blue or white state}{
    Select best possible refinement $r$\\
    perform $r$  on $H$\\
    $R.push(r)$\\
    $R_{new}.push(r)$\\
    
    \ForEach{$r \in R_{failed}$}{
        \uIf{$r$ possible via structure and possible via consistency}{
            $R_{failed}.delete(r)$\\ \label{line:queuedeletion}
            perform $r$ on $H$\\
            $R.push(r)$\\
            $R_{new}.push(r)$\\
        }
    }
  }

  Save current $H$\\ \label{line:saveH}
  \While{$R$ not empty}{
    $r \leftarrow R.pop()$\\
    undo $r$ on $H$\\
  }

  \Return $R_{new}$\\
\end{algorithm2e}

\section{Experiments on small dataset}\label{section:experiments_results}

We implemented our heuristic and our streaming approach in Flexfringe \citep{flexfringe, flexfringeRepo}. We compared with the Alergia-algorithm \citep{alergia_1994} as implemented in the framework, and with the heuristic of \cite{balle_2012}, herein after called \textit{SpaceSave}. In order to compare the results we used the PAutomaC-dataset \citep{verwer_pautomac}, consisting of 48 scenarios. Each scenario comes with a set of example traces extracted from existing automata, and the task is to infer the original automaton. Each scenario also comes with a test set, consisting of string-probability pairs, assigning a probability of each string to the real automaton. In the subsequent experiments we first learn a state machine for each scenario from the dataset. We can test the quality of our learned state machines via comparison of the divergences in between the assigned probability-distributions by the learned machine and the provided true probability of the strings to occur. Similar to the PAutomaC-competition \citep{verwer_pautomac}, we use the perplexity score to compare the two distributions. The smaller the perplexity, the closer the two distributions are. All results that we report are produced by a notebook with the following relevant specifications: 

\begin{enumerate}
    \item Operating system: Ubuntu 20.4
    \item CPU: Intel i7@2.60Ghz
    \item RAM: 16GB
\end{enumerate}

%We use the perplexity \equationref{eq:perplexity} of two probability measures $p(x)$ and $q(x)$ over the same $\sigma$-algebra $\mathit{A}$ and $x\in \mathit{A}$. The smaller the perplexity, the closer the two distributions are.

%\begin{equation}\label{eq:perplexity}
%    Perp(p, q) = 2^{-\sum_{x \in \mathit{A}} p(x)log(q(x))}
%\end{equation}

\subsection{The heuristic}

To compare the heuristics we first run them all in normal \textit{batch mode}. In order to better compare Alergia with the other two heuristics we augment Alergia with the well-known \textit{k-tails}-algorithm \citep{ktails}. In this case the merge procedure remains the same as in the Alergia algorithm, but the consistency-check includes checking the sub-trees of each node pair up to a depth of $k$. The reason for this is that the Alergia heuristic naturally does not have the lookahead-feature that CSS provides. In our experiments we found that the results stop improving after $k=3$, so we only report results up to $k=3$. We then run the same experiments with CSS, varying the $F_S$ parameter from a length of $1$ up to a length of $4$. Note that Alergia with $k=3$ and CSS with $F_S=4$ both look $4$ steps ahead of each state, hence the comparison is fair. We also test CSS-MinHash, where we hash down the strings to a size of $2$, while keeping $F_S$ at $3$ and $4$ respectively. Last but not least we did the same experiments with the SpaceSave heuristic. We report the perplexities of all heuristics but the SpaceSave heuristic in Fig. \ref{fig:batch_vs_stream_box_plots}, and we report the results of the SpaceSave heuristic in Fig. \ref{fig:boxplots_space_save}, including a slightly changed version we will discuss in Section \ref{sec:space_save_discussion}. We chose two plots because the perplexities for the SpaceSave are much higher, hence putting them into the same plot as the other heuristics would make the results of the other heuristics illegible. 

Additionally, for all the steps we measured the times it took to run through. The times are in \tableref{tab:batch_runtimes}.

\begin{table}[h]
\centering
\begin{tabular}{ ||c||c|c|c|c|c|| }
 \hline
 Heuristic & $F==0$ & $F==1$ & $F==2$ & $F==3$ & $F==4$ \\
 \hline\hline
 Alergia & 1m45s & 1m55s & 2m8s & 2m24s & -  \\ 
 CSS & - & 2m1s & 2m15s & 2m14s & 2m50s  \\ 
 CSS-MinHash & - & - & - & 3m14s & 3m40s  \\
 %CSS Unconditional & - & - & 2m16s & 2m24s & 2m37s \\
 SpaceSave & - & 4m37s & 6m51s & - & - \\
 \hline
\end{tabular}
\caption{Runtime comparisons of the heuristics and different future length parameters $F$, $F$ being a placeholder for either $k$, $F_s$, or $L$. Empty fields do not exist or are not interesting to us.}
\label{tab:batch_runtimes}
\end {table}

We can see that the runtimes of CSS are small, and that CSS-MinHash took longer. The reason here is that for the dataset even small steps ahead are enough to distinguish them correctly, whereas the strength of CSS-MinHash comes out when dealing with larger strings. We also have to point out that the times for the SpaceSave heuristic are not optimal, since we do not use the proposed data structure for the sketches \citep{metwally}. This does however not influence the errors presented. We found that its performance is due to the employed consistency check. We discuss this in detail in Section \ref{sec:space_save_discussion} and do not test it further, since the nature of their sketches does not enable undoing. 

\begin{figure}%[H]%[htbp]
    \centering
    \subfigure[Comparison of all heuristics tested but the SpaceSave heuristic. All tested heuristics come with varying lookahead parameters ($k$ for Alergia, $F_S$ for CSS).]{\label{fig:boxplots_batch_all}
      \includegraphics[width=0.8\textwidth]{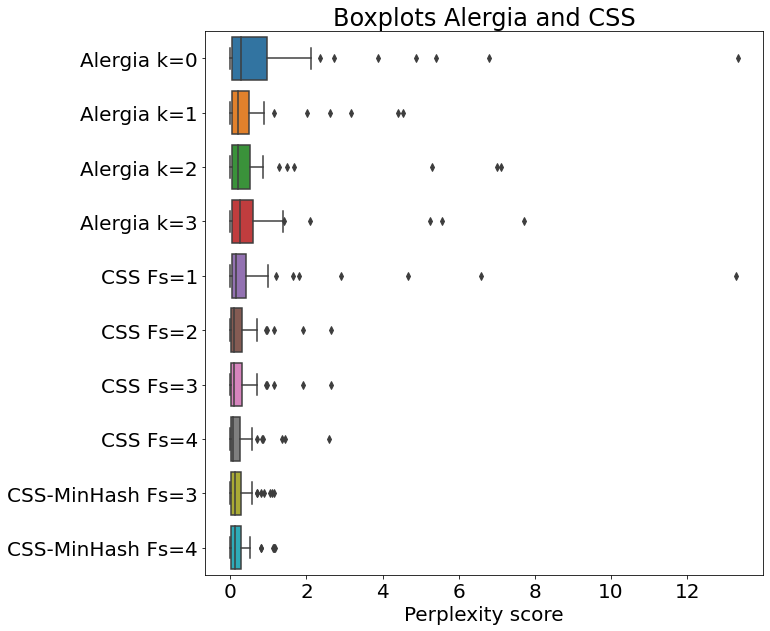}}
    \qquad
    \subfigure[Errors on SpaceSave heuristic. Pay attention to the difference in magnitude.]{\label{fig:boxplots_space_save}
      \includegraphics[width=0.73\textwidth]{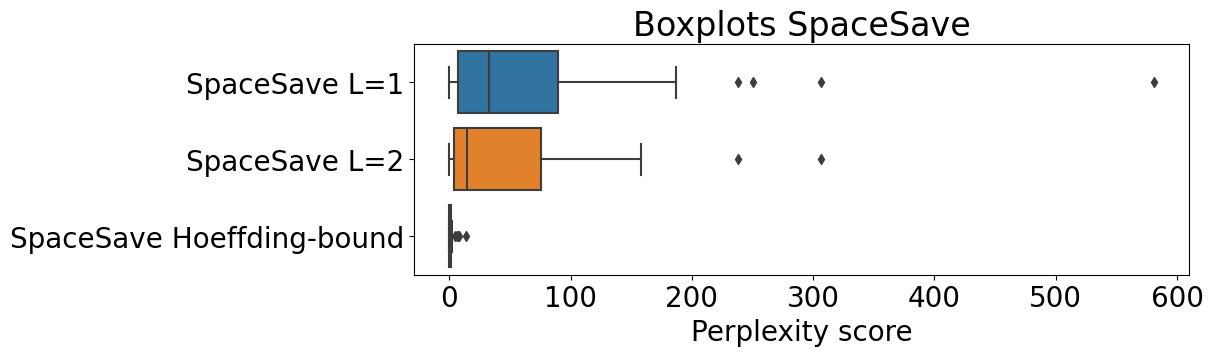}}
    \caption{Boxplots of all heuristics. Due to the difference in magitude in between the SpaceSave heuristic and the other heuristics we separated them into two sub-plots.}
    \label{fig:errors_space_save}
\end{figure}
      
\subsection{Discussion of SpaceSave}\label{sec:space_save_discussion}

As can be seen in \figureref{fig:errors_space_save}, the SpaceSave heuristic did not work all too well for us. The hyperparameters we used can be taken from \tableref{tab:params_spacesave}. While might have been able to improve via setting hyperparameters, we found it harder than the other two methods to find appropriate hyperparameters. We point this to the similarity test developed by Balle et al. and described in \citep{balle_2012}. In short, the similarity test assumes a distinguishability parameter $\mu$ for a pair of two sketches that is being approximated by value and a confidence interval estimated from sampled sketches. The term for the upper bound is determined by \equationref{eq:vc_upper_bound}. It is clear the the term includes the estimate $\hat{\mu}_k$ plus a confidence interval. The confidence interval itself is the sum of $8\nu$ (in our implementation we decreased it to $2\nu$ because it is a large constant even with relatively small $\epsilon$), with $\nu=2\epsilon$, a hyperparameter, and a max-term, where both terms within the max-term come in the form of an nth-square-root of $\frac{1}{M} \ln \frac{K_k}{\delta}$, with the pooled mass $M=\frac{m_1*m_2}{\left(\sqrt{m_1} + \sqrt{m_2}\right)^2}$ and $K_k \propto (m_1 + m_2)^2$. $m_1$ and $m_2$ are the sizes of the two respective sketches, i.e. \textit{the number of strings processed} \citep{balle_2012}. Comparing $M$ and $K_k$ in the limit leads to the following: $\lim_{m_1\rightarrow \infty} K_k = \infty$, but $\lim_{m_1\rightarrow \infty} M = m_2$, and vice versa with $m_2\rightarrow \infty$ due to the symmetry in $m_1$ and $m_2$. This means that for imbalanced states, i.e. one that has processed many strings, while the other has only processed few of them, the denominator becomes smaller and the nominator becomes larger within the max-term, i.e. the confidence bound becomes larger, even though both states might have gathered enough evidence to confidently conclude good approximation of their sampled distributions.

We show this problem with two tests that actually occurred in problem 5 of PAutomaC, an example where $m_1$ and $m_2$ are rather similar, the balanced example, and one imbalanced example, where the sizes of the states were rather unequal. In the balanced example we had one state with a size of approximately $m_1 \approx 8000$, and $m_2 \approx 11000$. The pooled mass became $M=4804$. As for the imbalanced example, $m_1 \approx 25000$ and $m_2 \approx 950$, hence $M=1016$. $K_k$ became very large in both cases, so the log term $ln \frac{K_k}{\delta}$ became $\approx 42$ in both cases. In both cases, the max-term was determined by the left side of the max-term in Eq. \ref{eq:vc_upper_bound}, and the term $\frac{1}{1c_1M}$ became $\approx 0.001$ for the balanced state pair, and $\approx 0.005$ for the imbalanced state pair, resulting in a max-term of $0.2$ for the balanced states, and $0.45$ for the imbalanced states. Given that the distinguishability of two states based on the $L_{\infty}^p$-distance as defined in \citep{balle_2012} lies in the interval $[0, 1]$, this a large change. It should be noted that in both cases the approximated distinguishability $\hat{\mu}_k$ was very small, and we considered $950$ a good enough sample size.

We provided a quick fix for the problems via simply replacing the similarity check with the Hoeffding-bound that is also used in Alergia and CSS. The fix, herein after called SpaceSave Hoeffding-bound, is depicted in Fig. \ref{fig:boxplots_space_save} along with the original results. As can be seen this improved the performance greatly, as well as runtime, because it makes the bootstrapping approach obsolete, simply relying on a single sketch per state. Comparing the results individually it is comparable to our earlier experiments run by CSS and Alergia as well. However, because the nature of the sketches does not permit the undo operation, which is required for our merging strategy, we did not proceed with this heuristic from here on. It should be noted that the large runtimes are partly attributable to the data structure we used to model the sketches described in the original paper, since we do not use the data structure described by \citep{metwally}.  %, as we show in \figureref{fig:space_save_with_hoeffding}, in which we compare the heuristic vs Alergia with $k==2$. A better performance might be achieved with some tuning, which we did not do at this point. Because the nature of the sketches does not permit the undo operation, which is required for our merging strategy, we did not proceed with this heuristic from here on. It should be noted that the large runtimes are partly attributable to the data structure we used to model the sketches described in the original paper, since we do not use the data structure described by \citep{metwally}. 

\begin{equation}\label{eq:vc_upper_bound}
    \mu^* \leq \min_{1\leq k \leq r^2} \left \{ \hat{\mu}_k + 8\nu + max\left  \{ \sqrt{\frac{1}{2c_1M} \ln \frac{K_k}{\delta}}, \sqrt[4]{\frac{(16\nu)^2}{2c_2M} \ln \frac{K_k}{\delta} } \right\} \right \}.
\end{equation}

\begin{table}
%\begin{center}
\centering
\begin{tabular}{ ||c|| c | c || }
 \hline
 Parameter & $L==1$ & $L==2$ \\
 \hline\hline
 $\epsilon$ & $0.005$ & $0.005$ \\ 
 $\delta$ & $0.05$ & $0.05$ \\ 
 $\mu$ & $0.6$ & $0.4$ \\ 
 $K$ (number of prefixes to track) & $20$ & $20$ \\ 
 $r$ (number of bootstrapped sketches) & $10$ & $10$ \\ 
 \hline
\end{tabular}
\caption{Hyperparameters used for the SpaceSave heuristic. The parameter's meaning can be taken from \citep{balle_2012}.}
\label{tab:params_spacesave}
%\end{center}
\end {table}

% \begin{figure}[!ht]
%     \centering
%     \includegraphics[width=0.8\textwidth]{Figures/perplexities_spacesavehoeffding_vs_alergia.jpg}
%     \caption{New results of the SpaceSave heuristic with the Hoeffding bound instead of the old check.}
%     \label{fig:space_save_with_hoeffding}
% \end{figure}

\subsection{Streaming}

After measuring the base-performance of our Count-Min-Sketch heuristic, we are also interested in the effect of our new streaming method. Having already experimented with different parameters, we decided to move on with Alergia $k=3$ and CSS-MinHash $F_s=4$, because these delivered the best results respectively. Note again that both look four steps into the future per state. To compare our streaming approach we compare it with the traditional streaming approach that does not undo and redo refinements, which we call \textit{streaming old}. We depict the perplexities of the streaming approaches in Fig. \ref{fig:batch_vs_stream_box_plots}. Because we also want to see how the streaming compares with the batch-mode version, we picked the best streaming version, i.e. the CSS-MinHash with $F_S=4$, and compared it with its respective batch-mode version in Fig. \ref{fig:css_batch_vs_stream}.

\begin{figure}%[H]%[htbp]
    \centering
    \subfigure[Perplexity scores of the stream settings and the new stream setting, represented in boxplots.]{\label{fig:batch_vs_stream_box_plots}
      \includegraphics[width=\textwidth]{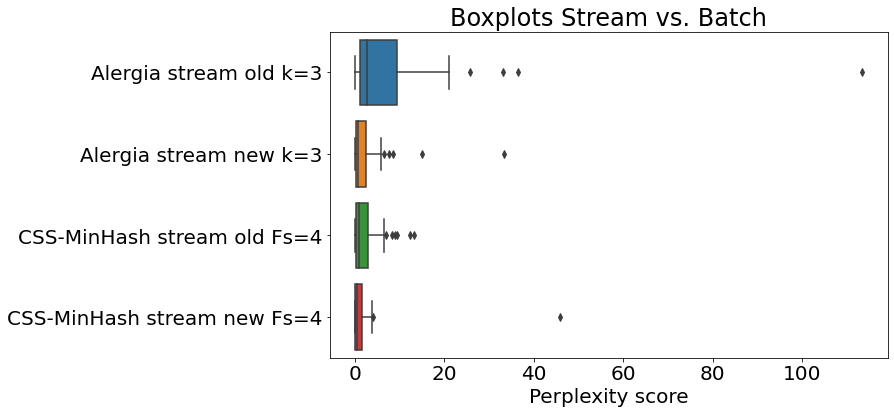}
    }
    \qquad
    \subfigure[Perplexity scores CSS-MinHash in batch-mode vs. the new stream-mode. Problem $20$ the stream mode had a perplexity score of around $40$.]{\label{fig:css_batch_vs_stream}
      \includegraphics[width=\textwidth]{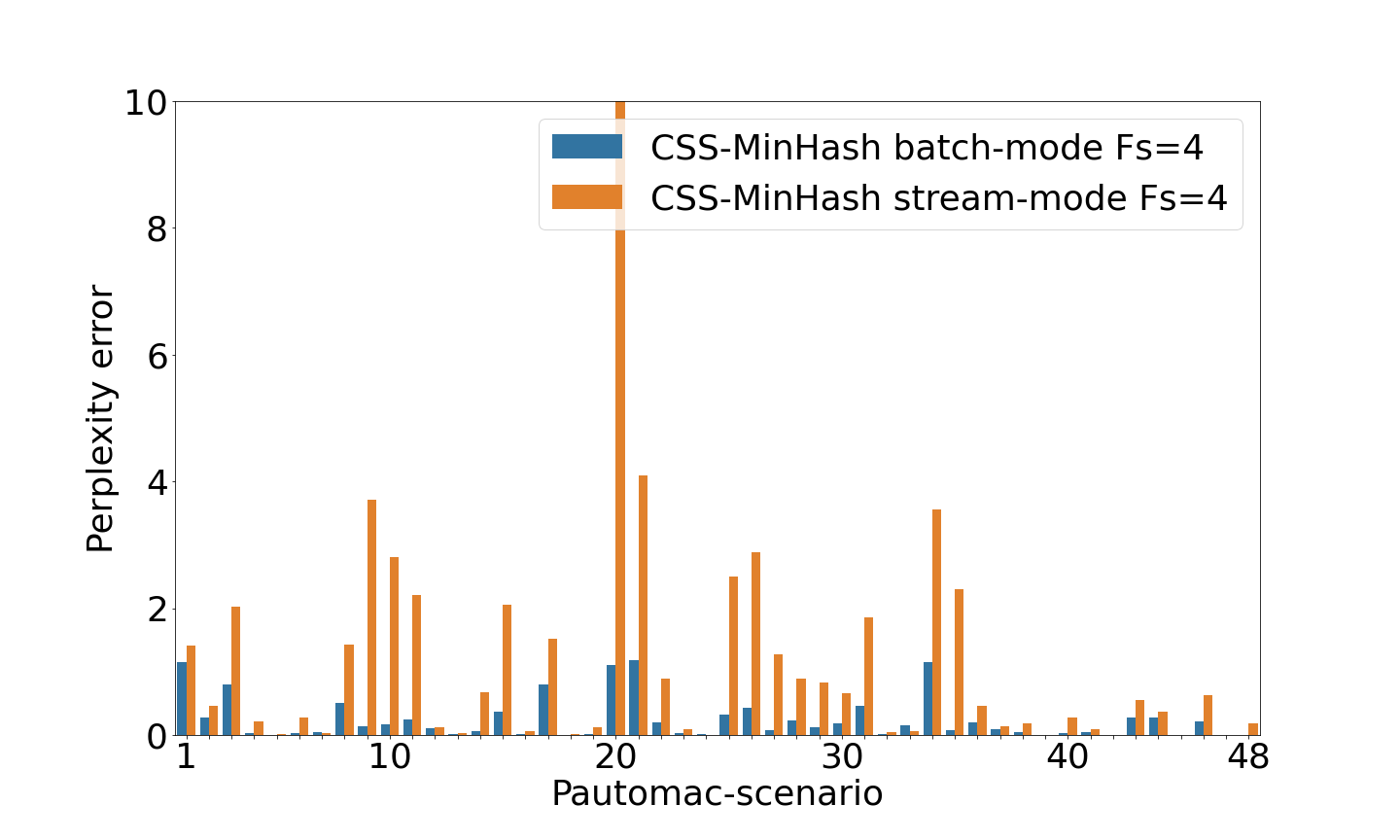}}
    %\caption{Perplexity score results of the streaming experiments.}
    \label{fig:box_plots_batch_mode}
\end{figure}

The run-times are as follows: Alergia in the old stream setting took 37s, while CSS took 67s, and on the new stream setting they took 51s (Alergia) and 78s (CSS). We report the maximum achieved heap size as measured using the Valgrind tool \citep{valgrind} in \tableref{tab:memory_css}. As we can see the streaming did drastically reduce the memory footprint and run-time compared with the batch version. We can also see that the old streaming approach is faster but gives worse results, and that the errors for Alergia are much higher on the old streaming approach, since on the new streaming approach it gets the chance to correct its mistakes. The main advantage of CSS over the k-tails augmented Alergia is then that CSS has more information on the fringes of the prefix tree. We see that in the new streaming CSS still performs better than Alergia on most problems, although the difference has become much smaller than on the old streaming setting. 

\begin{table}
\centering
\begin{tabular}{ ||c||c|c|c|c|c|c|c|| }
 \hline
 Strategy & Prob. 1 & Prob. 5 & Prob. 11 & Prob. 18 & Prob. 23 & Prob. 34 & Prob. 45 \\
 \hline\hline
 Batch & 250MB & 56MB & 1.1GB & 3.1GB & 1.0GB & 2.4GB & 257MB \\ 
 Stream old & 15MB & 13MB & 34MB & 123MB & 63MB & 53MB & 13MB \\ 
 Stream new & 14MB & 13MB & 29MB & 123MB & 63MB & 44MB & 13MB \\
 \hline
\end{tabular}
\caption{Memory-footprints compared on a few selected problems. Heuristic employed: CSS-MinHash $F_s==4$ .}
\label{tab:memory_css}
\end {table}

\section{Analysis}\label{sec:analysis}

\subsection{PAC learning}

%TODO: very similar to Balle's chapter. Should I change it more to look less like his?

The \underline{P}robably \underline{A}pproximately \underline{C}orrect (PAC) learning framework is a framework to analyze machine learning algorithms. Formally, we are given a class of distributions $\mathcal{D}$ over some set $X$, where $X$ remains fixed. A machine learning algorithm then PAC-learns $\mathcal{D}$ if for each $\delta>0$ and each $\epsilon > 0$ the algorithm takes $S(\frac{1}{\epsilon}, \frac{1}{\delta}, |D|)$ i.i.d. drawn samples from a $D \in \mathcal{D}$ and returns a hypothesis $H$ s.t. $L_1(D, H) \leq \epsilon$. In this notation, $|D|>0$ is a measure of complexity over $\mathcal{D}$. We say a PAC learner is efficient if $S( \cdot )$ and $T( \cdot )$ are polynomial in all their respective parameters.

In order to show PAC-bounds on our algorithm we need to analyze a few components of the algorithm. We will start from the ground up: Since the CMS do not represent the exact distribution, but instead only return approximations of distributions, which are sampled themselves, we first need to check how the CMS influence the sampled distributions we feed into the statistical test. Secondly, we need to error bound the statistical test. And thirdly, we need to analyze how the data stream constructs initial states, and bound the error of the actual graph creation through state merging.

\subsection{Analysis of Count-Min-Sketches}

To give error bounds on the sketches we first need a few definitions and notations from \cite{balle_2012}. Let $*$ denote the Kleene-operation. Let furthermore $D_1$ and $D_2$ be distributions on $\Sigma^*$, the set of all finite prefixes over $\Sigma$ and the empty string. Let $D_i(x\Sigma^*)$ be the probability of having prefix $x$ under distribution $D_i$, and let $L_{\infty}^p(D_1, D_2) = \max_{x\in \Sigma^*} \left| D_1(x\Sigma^*) - D_2(x\Sigma ^*)\right|$ be the prefix-$L_{\infty}$ distance between $D_1$ and $D_2$. We denote by $|\Sigma|$ the cardinality of the alphabet, and for ease of notation we assume the final symbol $\xi$ to be part of the alphabet from now on: $\Sigma \leftarrow \Sigma \cup \xi$. Finally we do need a formal definition of to distinguish two distributions, something we will extensively use throughout the remainder of this section. Following \cite{clark_pac_paper} and \cite{palmer_pac_paper} we define the distinguishability of two distributions and states.

\vspace{\baselineskip}

\begin{customdefinition}[$\mu$-Distinguishability]
    We call two distributions $D_1$ and $D_2$ $\mu$-distinguishable when $L_{\infty}^p(D_1, D_2) \geq \mu$. We call a PDFA $T$ $\mu$-distinguishable when $\forall q_1, q_2\in T: L_{\infty}^p\left(D_{q1}, D_{q2}\right) \geq \mu$. 
\end{customdefinition}

\noindent In a first, we want to look at the error that the CMS introduce on the Alergia-test of the original sampled distribution. Since the CMS approximate the sampled distributions, we can either overestimate or underestimate the left hand side of Eq. \ref{eq:hoeffding}, the statistical test we perform and first mentioned in Alg. \ref{alg:consistency}. The test compares a pair of two states and tests them on similarity, either revealing similar or dissimilar, corresponding with the binary outcome of wether the two distributions are deemed equal or not.

\begin{equation}\label{eq:hoeffding}
     \left| \frac{x_i^l}{n_l} - \frac{x_i^r}{n_r} \right| < \sqrt{\frac{1}{2}log\left(\frac{2}{\alpha}\right)}\left(\frac{1}{\sqrt{n_l}}+\frac{1}{\sqrt{n_r}}\right),
\end{equation}

\noindent Formally, the test is derived from the Hoeffding-bound (see \cite{alergia_1994}), which is why we also refer to it as the Hoeffding-test. In this equation, the left-hand-side represents two sampled distributions of elements $x_i$, and $n_l$ and $n_r$ are the respective sample sizes. Hence, $\frac{x_i^l}{n_l}$ represents the frequency of sampled element $x_i$ in the left distribution, and $\frac{x_i^r}{n_r}$ its sampled frequency in the right distribution. The two states are deemed similar if the inequality holds for all elements $x_i$ that are in the two sampled distributions. We state the following theorem: 

\begin{customtheorem}\label{theorem:upper_bound_hoeffding}
    Given the parameters ($\beta$, $\gamma$)\footnote{By convention these are be called $\epsilon$ instead of $\beta$ and $\delta$ instead of $\gamma$, however we already used these two letters.}, and given two CMS whose width $w$ and depth $d$ of a CMS assume $w=\left\lceil{\frac{e}{\beta}}\right\rceil$ and $d=\left\lceil{\ln \frac{1}{\gamma}}\right\rceil$. The approximation error on the left hand side of the Hoeffding-test (Eq.~\ref{eq:hoeffding}) is upper-bounded by $\beta$ with probability at least $1-2\gamma$.
\end{customtheorem}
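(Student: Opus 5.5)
The plan is to reduce everything to the standard Count-Min-Sketch point-query guarantee recalled in the Background. Fix one element $x_i$ that is compared in the Hoeffding-test. Let the left state have true count $x_i^l$ and sample size $n_l$, and let $\hat{x}_i^l$ be the value returned by $retrieve(x_i)$ on the left sketch; use the analogous notation for the right state. Every string inserted into a state's sketch corresponds to one traversal of that state. So the total mass $\sum_y c_y$ stored in the sketch is the normaliser $n_l$ (respectively $n_r$), provided we restrict to one sketch $C_k$ per string length, as in the decoupled construction of Section~\ref{sec:hashing}. This identification is what makes the additive CMS error become a $\beta$-error on frequencies.

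\textbf{Per-sketch bound.} With $w=\lceil e/\beta\rceil$ and $d=\lceil \ln(1/\gamma)\rceil$, the CMS guarantee gives two facts. First, $x_i^l \le \hat{x}_i^l$ deterministically, since counts are only incremented and the minimum is taken over rows. Second, $\hat{x}_i^l \le x_i^l + \beta n_l$ with probability at least $1-\gamma$. I would briefly re-derive the second fact: by pairwise independence, the expected collision mass in one row is at most $n_l/w \le \beta n_l/e$. Markov's inequality then bounds the probability of exceeding $\beta n_l$ in one row by $1/e$. Independence of the $d$ rows gives $e^{-d}\le\gamma$. Dividing by $n_l$ yields $0\le \hat{x}_i^l/n_l - x_i^l/n_l \le \beta$, and the same holds on the right with probability $1-\gamma$.

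\textbf{Combining the two sides.} By a union bound, both events hold simultaneously with probability at least $1-2\gamma$. Write $a=x_i^l/n_l$, $b=x_i^r/n_r$, $\Delta_l=\hat a - a\in[0,\beta]$ and $\Delta_r=\hat b-b\in[0,\beta]$. Then
\begin{equation*}
\left|\,|\hat a-\hat b| - |a-b|\,\right| \le |\Delta_l-\Delta_r| \le \max(\Delta_l,\Delta_r)\le\beta .
\end{equation*}
The key point is that both errors have the same sign, because a CMS only overestimates. The differences therefore partially cancel and the bound is $\beta$ rather than the naive $2\beta$. This is the step I expect to need care, and it is where one-sidedness must be invoked explicitly.

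\textbf{Main obstacle.} The delicate part is justifying that the stream-size parameter of the CMS bound equals the normaliser $n_q$ used in the test. There are two difficulties. With a single sketch storing strings of several lengths, the total inserted mass exceeds $n_q$, so I will either assume the per-length decomposition or rescale $\beta$ by $F_s$. After merges, the sketch is a matrix sum, but the guarantee still holds because the summed sketch equals the sketch of the concatenated stream under the same hash functions. I will also state that the theorem is a per-element (pointwise) guarantee. A bound uniform over all of $S$ would require a further union bound over $|S|$ and is not claimed here.
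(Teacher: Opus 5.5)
Your proof is correct and takes essentially the same route as the paper. Both use a per-row Markov bound on the collision mass (expectation at most $\beta n/e$), independence of the $d$ rows to get $e^{-d}\le\gamma$, a union bound over the two sketches for $1-2\gamma$, and the one-sidedness of CMS overestimates to obtain $\beta$ rather than $2\beta$; the paper does this last step by splitting on the sign of the estimated difference and dropping the subtracted collision term, which amounts to the upper half of your reverse-triangle inequality. Your formulation $\bigl|\,|\hat a-\hat b|-|a-b|\,\bigr|\le\beta$ additionally yields the lower-direction statement of Theorem~\ref{theorem:hoeffing_test_lower_bound} on the same event, and it makes explicit that the sketch's total mass must be at most $n_q$ (one sketch per string length), which the paper assumes tacitly when its Markov step identifies the per-row total $m$ with $n_1$.
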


\begin{proof}
The following definitions are from~\citep{count-min-sketch}. We need indicator variables 

\begin{equation}
    I_{i,j,k}=
    \begin{cases}
    1 , & \text{if } h_k(i) == h_k(j) \text{ and } i\not=j\\
    0,              & \text{otherwise},
\end{cases}
\end{equation}

\noindent again with $h_k(i)$ being the hash-function associated with row $k$, hashing element $i$, and we need a term for the collisions:

\begin{equation}
    C_{i, k} = \sum_{j=1}^w I_{i,j,k}x_j.
\end{equation}

\noindent Assuming the hash functions hash uniformly we further get the following two inequalities from~\citep{count-min-sketch}: 

\begin{equation}
   \mathbb{E}[I_{i,j,k}] = P\left(h_k(i)=h_k(j)\right) = \frac{1}{w} \leq \frac{\beta}{e}, 
\end{equation}

and 

\begin{equation}
    \mathbb{E}[C_{i,k}]=\mathbb{E}\left[\sum_{j=1}^w I_{i,j,k}x_j \right] \leq \sum_{j=1}^w x_j\mathbb{E}[I_{i,j,k}] \leq m\frac{\beta}{e},
\end{equation}

\noindent with $m$ being the total counts per row of the CMS, and $e$ being Euler's constant. With these definitions, we can now bound the error introduced by the CMS. As by the nature of the algorithm, we know that all counts $x_{i,j}$ in the CMS are greater than or equal to zero at all times. We define a new random variable $Z_i=\left| \frac{1}{n_1}\min_k count(x_{i,k}) - \frac{1}{n_2}\min_k count(y_{i, k})\right|$, where $x_{i,k}$ are the entries in the first CMS, and $y_{i,k}$ the entries in the second CMS. We consider two cases: 1. The non-absolute (inner) term of $Z_i$ is $\geq 0$,  and 2. the non-absolute (inner) term of $Z_i$  is $< 0$. Taking the first case and assuming our sampled distributions are $\mu$-distinguishable\footnote{Alternatively we could use any other placeholder constant for the real difference in distributions, since we are only interested in the error.} we get a  $k_1$ and a $k_2$ such that the minimum term is satisfied and hence

\begin{multline}\label{eq:Z_i_basic}
%\begin{center}
    Z_i = \frac{x_{i,k_1}}{n_1} - \frac{y_{i,k_2}}{n_2} + \frac{1}{n_1}\sum_{j=1}^w I_{i,j,k_1}x_{j,k_1} - \frac{1}{n_2}\sum_{j=1}^w I_{i,j,k_2}y_{j,k_2} \leq \\ \mu + \frac{1}{n_1}\sum_{j=1}^w I_{i,j,k_1}x_{j,k_1} - \frac{1}{n_2}\sum_{j=1}^w I_{i,j,k_2}y_{j,k_2} \leq \mu + \frac{1}{n_1}\sum_{j=1}^w I_{i,j,k_1}x_{j,k_1}.
%\end{center}
\end{multline}

\noindent The first inequality comes from $\mu$ as the upper bound on the real distributions, and the second one from the fact that our entries in the CMS are all zero or positive. Following the above, we do get the following bound on the expected value:

\begin{equation}
    \mathbb{E}[Z_i] \leq \mu + \mathbb{E}\left[\frac{1}{n_1}\sum_{j=1}^w I_{i,j,k_1}x_{j,k_1} - \frac{1}{n_2}\sum_{j=1}^w I_{i,j,k_2}y_{j,k_2}\right] \leq \mu + \mathbb{E}\left[\frac{1}{n_1}\sum_{j=1}^w I_{i,j,k_1}x_{j,k_1}\right] \leq \mu + \frac{\beta}{e}.
\end{equation}

Finally using the Markov-inequality we get 

\begin{multline}
    P(Z_i\geq \mu + \beta) \leq P\left(\mu + \min_k \frac{1}{n_1}\sum_{j=1}^w I_{i,j,k}x_{j,k} \geq \mu + \beta \right) = P\left(\min_k \frac{1}{n_1}\sum_{j=1}^w I_{i,j,k}x_{j,k} \geq \beta \right) = \\ P\left(\forall k: \frac{1}{n_1}C_{i,k} \geq \frac{e}{n_1}\mathbb{E}[C_{i,k}]\right)=\frac{1}{e^d} = \gamma. 
\end{multline}

\noindent The other direction, when $Z_i<0$, can be shown analogously. We end up with two cases: Either the first term gets overestimated, or the second term gets underestimated, and a union bound brings us to $P(|Z_i| < \mu + \beta) \geq 1-2\gamma$.

\end{proof}

\noindent This settles the upper bound for the left hand side of the test. But we are also interested in the lower bound, given by the following theorem:

\begin{customtheorem}\label{theorem:hoeffing_test_lower_bound}
    With a probability of at least $1-2\gamma$ the left hand side of the Hoeffding-test with CMS is not underestimated by an offset larger than $\beta$. 
\end{customtheorem}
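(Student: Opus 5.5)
The plan is to mirror the proof of Theorem~\ref{theorem:upper_bound_hoeffding}, reusing its indicator variables $I_{i,j,k}$, collision terms $C_{i,k}$, and the bound $\mathbb{E}[C_{i,k}]\le m\beta/e$. The key structural fact is one-sidedness: every CMS retrieval satisfies $\hat c_i = c_i + \min_k C_{i,k} \ge c_i$, because all entries are nonnegative. An underestimate of the left-hand side
\[
\left|\frac{\hat x_i}{n_1}-\frac{\hat y_i}{n_2}\right|
\quad\text{relative to}\quad
\left|\frac{x_i}{n_1}-\frac{y_i}{n_2}\right|
\]
therefore cannot come from a count being too small. It can only come from the \emph{smaller} of the two frequencies being inflated by collisions, which pushes it toward, or past, the larger one.

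I will split into the same two cases as before, according to the sign of the true inner term $\frac{x_i}{n_1}-\frac{y_i}{n_2}$.

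In the first case, $\frac{x_i}{n_1}\ge \frac{y_i}{n_2}$. Writing $\delta_1=\frac1{n_1}\min_k C^{(1)}_{i,k}\ge 0$ and $\delta_2=\frac1{n_2}\min_k C^{(2)}_{i,k}\ge0$, the estimated inner term is the true term plus $\delta_1-\delta_2$. By the reverse triangle inequality,
\[
\left|\tfrac{\hat x_i}{n_1}-\tfrac{\hat y_i}{n_2}\right| \;\ge\; \left(\tfrac{x_i}{n_1}-\tfrac{y_i}{n_2}\right)-\delta_2 .
\]
So an underestimate larger than $\beta$ forces $\delta_2\ge\beta$. I then bound $P(\delta_2\ge\beta)$ exactly as in the previous proof. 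By Markov's inequality on each row, $P\!\left(\frac1{n_2}C^{(2)}_{i,k}\ge\beta\right)\le 1/e$, using $\mathbb{E}[C^{(2)}_{i,k}]/n_2 \le \beta/e$ with the normalisation $m=n_2$ adopted there. Independence of the $d$ rows then gives $e^{-d}\le\gamma$. The second case, $\frac{x_i}{n_1}<\frac{y_i}{n_2}$, is symmetric with the roles of the sketches swapped, and the bad event becomes $\delta_1\ge\beta$, again of probability at most $\gamma$.

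To get a statement that does not depend on which case holds, I take a union bound over the two bad events $\{\delta_1\ge\beta\}$ and $\{\delta_2\ge\beta\}$. This yields probability at least $1-2\gamma$ that neither occurs, and hence that the underestimate is at most $\beta$, matching the claimed constant.

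The main obstacle is the normalisation. The CMS guarantee bounds collisions relative to the total mass $m$ stored in a row, whereas the test divides by the state size $n_q$. When outgoing strings of several lengths or positions share a sketch, $m$ can exceed $n_q$. I would handle this the way Theorem~\ref{theorem:upper_bound_hoeffding} implicitly does: assume one sketch per string length, so each traversal contributes at most one count and $m\le n_q$. I would state this assumption explicitly, or else carry a factor $m/n_q$ into $\beta$.
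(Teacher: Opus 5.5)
Your proposal is correct and takes essentially the paper's route. You split into two cases by the sign of the true difference, note that an underestimate beyond $\beta$ forces the collision term of the smaller frequency to reach $\beta$, bound that by $e^{-d}\le\gamma$ via Markov per row and independence across rows, and union-bound the two cases to get $1-2\gamma$. Your reverse-triangle-inequality step is cleaner than the paper's ``fix one sum'' argument, and your explicit assumption $m\le n_q$ (one sketch per string length) states a normalisation the paper leaves implicit.
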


\begin{proof}
    Taking $Z_i$, we want a lower estimate this time. Following the proof of Theorem~\ref{theorem:upper_bound_hoeffding}, we can get a wrong estimates via two causes in Eq.~\ref{eq:Z_i_basic}: Either the second sum grows in relation to the first, or the first sum shrinks in relation to the second. We do get an upper estimate of the error by holding one of the two sums fix and increasing the other. Take the case that the first sum is larger than the second, then fixing the first some to $\psi_1 = \frac{1}{n_1}\sum_{j=1}^w I_{i,j,k_1}x_{j,k_1}$

    \begin{equation}
        Z_i = \mu + \psi_1 - \frac{1}{n_2}\sum_{j=1}^w I_{i,j,k_2}y_{j,k_2}.
    \end{equation}

    \noindent In this case, 

    \begin{equation}
        P \left( Z_i \leq \mu - \beta \right)=P\left(\forall k_2: \mu - \frac{1}{n_2}\sum_{j=1}^w I_{i,j,k_2}y_{j,k_2} \geq \mu - \beta \right) \leq \gamma.
    \end{equation}

    \noindent The proof to the last inequality can be found in Theorem 1 of~\citep{count-min-sketch}. The alternative case is when the second term is larger than the first one, in which case we get an upper estimate via fixing the second term through $\psi_2 = \frac{1}{n_2}\sum_{j=1}^w I_{i,j,k_2}y_{j,k_2}$, and 

    \begin{equation}
        Z_i = \mu + \frac{1}{n_1}\sum_{j=1}^w I_{i,j,k_1}x_{j,k_1} - \psi_2.
    \end{equation}

    \noindent This case is analogue to the first one, except that we have to take the absolute value in the Hoeffding-test into account, i.e.

    \begin{equation}
        P \left(|Z_i| \leq \mu - \beta \right)=P \left( \forall k_1: \mu - \frac{1}{n_1}\sum_{j=1}^w I_{i,j,k_1}x_{j,k_1} \geq \mu - \beta \right) \leq \gamma.
    \end{equation}

    \noindent Taking a union bound over the two cases leads us to $P(|Z_i| \leq \mu - \beta)\geq 2\gamma$, and hence. 

    \begin{equation}
        P \left( |Z_i| \geq \mu - \beta \right)\geq 1- 2\gamma.
    \end{equation}
    
\end{proof}

\noindent The following are a few statements on the heuristic that will be useful later.

\begin{customcorollary}[Run-time of heuristic]
    Processing strings takes time $\mathcal{O}\left(F_s \right)$, and performing the Hoeffding-test takes time $\mathcal{O}\left(|\Sigma|^{F_s}\right)$.
\end{customcorollary}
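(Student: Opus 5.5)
The corollary is an operation count, so the plan is to charge each step of processing a string and of Algorithm~\ref{alg:consistency} against the CMS operations, treating the sketch dimensions $w=\lceil e/\beta\rceil$ and $d=\lceil \ln(1/\gamma)\rceil$ as constants. These are fixed hyperparameters, independent of the stream, so a single insert or retrieve costs $\mathcal{O}(d)=\mathcal{O}(1)$, as stated in Section~\ref{subsec:heuristic}. I will also treat hashing a string of bounded length as constant cost.

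\textbf{Processing.} When a string $\sigma$ traverses a state $q$ in Algorithm~\ref{alg:streamingtree}, ``update $q$ with relevant data'' means inserting the outgoing (sub-)strings starting at $q$ into its sketches. With the decoupled construction of Section~\ref{sec:hashing}, $q$ holds sketches $C_1,\dots,C_{F_s}$, and for each $l\le F_s$ there is exactly one outgoing substring of length $l$ beginning at $q$, namely the next $l$ symbols of $\sigma$, truncated at $\xi$. So the update consists of at most $F_s$ insertions, each costing $d$ counter increments. The total is $\mathcal{O}(dF_s)=\mathcal{O}(F_s)$ per traversed state.

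\textbf{Hoeffding-test.} Algorithm~\ref{alg:consistency} loops over the set $S$ of candidate strings. For each $\sigma$ it does two retrieves at $\mathcal{O}(d)$ each, one constant-time comparison against the Hoeffding bound, and two appends. Every $\sigma\in S$ has length at most $F_s$ over the alphabet $\Sigma$, which now includes $\xi$. Hence $|S|\le \sum_{l=1}^{F_s}|\Sigma|^l=\mathcal{O}(|\Sigma|^{F_s})$, a geometric sum dominated by its last term for $|\Sigma|\ge 2$. The final cosine similarity is linear in the vector lengths, which are at most $|S|$. The test therefore costs $\mathcal{O}(|\Sigma|^{F_s})$. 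With the decoupled sketches, the checks on $C_1,\dots,C_{F_s}$ run in sequence and sum to the same geometric bound, and averaging the $F_s$ scores adds only $\mathcal{O}(F_s)$.

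\textbf{Main subtlety.} The only delicate step is deciding what is treated as a constant. The bound $\mathcal{O}(F_s)$ silently absorbs $d$, and if $d$ is kept explicit the statements become $\mathcal{O}(F_s\ln(1/\gamma))$ and $\mathcal{O}(|\Sigma|^{F_s}\ln(1/\gamma))$. I would state this assumption explicitly. I would also note that in the CSS-MinHash variant the test bound improves to $\mathcal{O}(|\Sigma|+|\Sigma|^{l_m})$, as remarked in Section~\ref{sec:hashing}, so the corollary is a worst-case bound for plain CSS.
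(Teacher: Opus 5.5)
Your proposal is correct and follows the same argument as the paper. The paper invokes $\mathcal{O}(1)$ CMS store/retrieve and multiplies by the $F_s$ futures inserted per state and by the $\mathcal{O}(|\Sigma|^{F_s})$ candidate strings retrieved in the test; your version only makes explicit that $d$ and the per-key hashing cost are treated as constants. If you want the hashing assumption to hold when $F_s$ is itself the growth parameter, note that the $F_s$ keys at a state are nested prefixes of the remaining trace, so an incremental fingerprint produces all of them in $\mathcal{O}(F_s)$ total rather than the naive $\mathcal{O}(F_s^2)$.
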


\begin{proof}
    From~\citep{count-min-sketch} we already know that both a store and a retrieve from a CMS take time $\mathcal{O}\left(1 \right)$. The bottleneck when storing is the number of futures, and when retrieving distributions the bottleneck is the number of elements we can get.
\end{proof}

\noindent The CSS-conditional framework greatly reduces run-time, but of course we do not model conditional probabilities anymore and hence break some of the assumptions on the distributions.

\begin{customlemma}
    Under CSS-MinHash the run-time for the Hoeffding-test reduces to $\mathcal{O}\left(|\Sigma|^{l_{m}}\right)$.
\end{customlemma}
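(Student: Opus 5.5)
The plan is to mirror the proof of the preceding corollary and to count how many distinct keys the Hoeffding-test in Alg.~\ref{alg:consistency} has to retrieve from the sketches. By the corollary, each retrieve from a CMS costs $\mathcal{O}(1)$ for a fixed sketch size (depth $d$ and width $w$ fixed). The test therefore runs in time proportional to $|S|$, the number of distinct elements stored in or queried from the sketches of a state, times a constant for the absolute-difference comparison and the cosine accumulation. So it suffices to bound the number of distinct keys that can appear under CSS-MinHash by $\mathcal{O}(|\Sigma|^{l_m})$.

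The key step is a case split on the length of the outgoing (sub-)strings, following the decoupled sketches $C_1,\ldots,C_{F_s}$ of Section~\ref{sec:hashing}.

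\emph{Short strings.} For lengths $\ell\le l_m$ the strings are stored unhashed. Sketch $C_\ell$ can contain at most $|\Sigma|^\ell$ distinct keys, so summing over lengths gives
\begin{equation*}
\sum_{\ell=1}^{l_m}|\Sigma|^\ell=\mathcal{O}\left(|\Sigma|^{l_m}\right),
\end{equation*}
since the geometric sum is dominated by its last term whenever $|\Sigma|\ge 2$.

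\emph{Long strings.} For lengths $\ell>l_m$, every string is passed through the $l_m$ MinHash functions $h_1,\ldots,h_{l_m}$. Each $h_i$ maps the symbol set $\{\sigma\}$ to a single element of $\Sigma$. The image is therefore a signature in $\Sigma^{l_m}$, and there are at most $|\Sigma|^{l_m}$ distinct keys per such sketch, irrespective of $\ell$. The $F_s-l_m$ sketches of this kind contribute $\mathcal{O}\left((F_s-l_m)|\Sigma|^{l_m}\right)$ keys, which is $\mathcal{O}(|\Sigma|^{l_m})$ when $F_s$ is treated as a constant hyperparameter. Alternatively, if all hashed lengths share one sketch, the contribution is exactly $|\Sigma|^{l_m}$.

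Adding both cases, the loop over $S$ in Alg.~\ref{alg:consistency} performs $\mathcal{O}(|\Sigma|^{l_m})$ constant-time iterations, which gives the claim. The $\mathcal{O}(|\Sigma|)$ cost of computing a MinHash signature is paid when strings are inserted, not during the test, which is why it does not appear in the stated bound, unlike the insertion-side bound $\mathcal{O}(|\Sigma|+|\Sigma|^{l_m})$ mentioned in Section~\ref{sec:hashing}.

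The main obstacle is justifying that the range of the MinHash signature really has size at most $|\Sigma|^{l_m}$, and that the test only iterates over that range rather than over the original set of observed strings of length up to $F_s$. I would make this explicit by arguing that $S$ is, by construction, the set of hashed keys, not of raw strings. I would also state the assumption that $F_s$ (equivalently, the number of sketches) is a constant, so that the dependence on $F_s$ is absorbed into the $\mathcal{O}$.
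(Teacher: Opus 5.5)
Your proposal is correct and is essentially the paper's own reasoning made explicit. The paper gives no separate proof of this lemma; it relies only on the remark in Section~\ref{sec:hashing} that strings longer than $l_m$ are MinHashed down to $l_m$ symbols, so the test ranges over $\mathcal{O}(|\Sigma|^{l_m})$ keys plus an $\mathcal{O}(|\Sigma|)$ hashing term. The one difference is bookkeeping: you charge that $\mathcal{O}(|\Sigma|)$ term to insertion, whereas the paper folds it into the run-time, where it is absorbed anyway since $|\Sigma|\le|\Sigma|^{l_m}$ for $l_m\ge 1$.
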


%We also want to describe the error introduced by the MinHash-algorithm. 

%\begin{customcorollary}[Compression rate of MinHash]
%    Under the CSS-MinHash scheme, for n-grams whole length $l$ is larger than $l_{m}$ the compression rate under maximum entropy is $\frac{|\Sigma|^{l_{m}}}{\sum_{i=l_{m} + 1}^{F_s} |\Sigma|^i}$. TODO: fix this corrollary
%\end{customcorollary}

\subsection{Bounding the statistical test}

In the following we want to bound the actual errors that can happen through the statistical test. A binary test can make two kinds of errors: A false positive (false accept), or a false negative (false reject). To bound those two we need an extra parameter, namely the expected number of states $n$. We also have to assume that the strings $\sigma$ we draw from $D$ are i.i.d. distributed and that the distribution $D$ remains fixed in time. Obviously our confidence in the statistical test will grow with more samples, therefore we want an intuition of how data points propagate through the unmerged automaton:

\begin{customtheorem}[Samples processed]
    Given batch-size $B$, upper bound on state $n$, and $p_{min}$ the minimum transition probability $\lambda_{min}=\min_{i, j} \lambda(q_i, a_j)$. The algorithm reads $\mathcal{O}(nBN_{max})$ samples from the stream, where $\mathbb{E}[|\lambda(q', a')|] = N_{max} \cdot \lambda_{min}$, with $q', a'$ chosen s.t. $\lambda_{min}=\lambda(q', a')$.
\end{customtheorem}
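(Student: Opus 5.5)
The plan is to count how many strings must be read before every state of the target automaton has been created in the streamed prefix tree and has reached the blue threshold, given the rule in Algorithm~\ref{alg:streamingtree} that nodes are created only as children of red or blue states.

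The count proceeds in layers. In the red-blue framework each minimization call (one per batch) can turn at most one further layer of the fringe red or blue. If the target has at most $n$ states, every state is reached by an access path of length at most $n-1$. Hence the fringe has to be advanced at most $n$ times, and each advance requires at least one full batch of $B$ strings. This yields the factor $nB$.

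The factor $N_{max}$ comes from the rarest transition. A newly created child is reached only through the transition that leads into it. Among all transitions, the one with probability $\lambda_{min}=\lambda(q',a')$ is the bottleneck. I read $N_{max}$ as the number of passes through $q'$ needed, in expectation, before the transition $(q',a')$ has been taken often enough, i.e.\ $N_{max}\lambda_{min}$ traversals, which is what the condition $\mathbb{E}[|\lambda(q',a')|]=N_{max}\lambda_{min}$ encodes. For this child to reach size $t_S$ and become blue, about $N_{max}$ batches' worth of traffic must pass through its parent.

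Combining the two counts, each of the at most $n$ layers costs at most $N_{max}$ batches of size $B$. This gives $\mathcal{O}(nBN_{max})$ samples in expectation.

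The main obstacle is that the mass reaching a depth-$\ell$ state from the root is a product of $\ell$ transition probabilities, not just $\lambda_{min}$. I would handle this by taking $N_{max}$ to absorb the worst access-path probability, i.e.\ by defining it relative to the minimal probability of reaching any state via its access sequence. If needed, I would add a Chernoff/Hoeffding step so that the statement holds with high probability rather than only in expectation. I would also argue that, by the undo/redo strategy of Algorithm~\ref{alg:mergestrategy}, refinements from earlier layers are not lost between batches, so the layers genuinely accumulate.
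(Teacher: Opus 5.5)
Your argument reaches the same bound with different bookkeeping. The paper inducts over the $n$ states. For the root's first child it applies Markov's inequality to the number of strings with prefix $a_1$. It then adds a further $N_{a_i}\le N_{max}$ samples per promoted state (``sum of linear growth'') and at most $B-1$ strings per batch boundary. This additive count is really of order $n(N_{max}+B)$, and it is stated as $\mathcal{O}(nBN_{max})$. You instead count depth layers (at most $n$, since access paths have length at most $n-1$) and charge each layer $N_{max}$ batches of $B$ strings, which gives the product directly.

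Your route gains two things the paper glosses over:
\begin{itemize}
\item You make explicit that the traffic reaching a depth-$\ell$ state is a product of transition probabilities. So $N_{max}$ must be defined relative to access-path mass for the bound to hold at all. The paper hides this in ``for some $N\le N_{max}$'' and in simply adding $N_{a_1}+N_{a_2}$.
\item Your waiting-time/Chernoff view points in the right direction. The paper's Markov step only bounds the probability of reaching $m_0$ from above, whereas an upper bound on samples needs a lower-tail guarantee.
\end{itemize}
The paper's route, in turn, needs no claim about how far the fringe moves per minimization call, and it exposes the sharper additive form.

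Some corrections to your bookkeeping:
\begin{itemize}
\item ``Each advance requires at least one full batch'' and ``at most one layer per call'' are lower-bound facts and play no role in an upper bound. $B$ enters only because a completed layer waits at most $B-1$ further strings for the next minimization call.
\item ``$N_{max}$ batches per layer'' mixes units, since $N_{max}$ counts passes or samples. The per-layer cost is about $N_{max}+B$ strings, and your $N_{max}B$ is a relaxation of it.
\item Layers accumulate because the tree nodes, their counts and the red/blue marks persist across calls. The refinements themselves are undone at the end of every call and only replayed from $R_{old}$, so they do not provide this persistence.
\end{itemize}
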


\begin{proof}
    We define the non-trivial set of prefixes $\mathcal{F}$ the set of all prefixes whose probability is not zero under $D$: $\mathcal{F}=\left\{\sigma\in \Sigma^* | D(\sigma)>0 \right\}$. Further, let $m_0$ be the threshold before a state can be promoted blue, and $B$ the batch-size. Consider only the root-node $q_0$. In order to create a blue state $q_1$ with $\tau(q_0, a_1, q_1)$ we need at least $m_0$ strings of the form $a_1\Sigma^*$. Then, by Markov's inequality for some $N \leq N_{max}$

    \begin{equation}
        P(|q_1|\geq m_0) \leq \frac{ND(a_1\Sigma^*)}{m_0} \leq \frac{N_{max}D(a_1\Sigma^*)}{m_0} \leq\frac{N_{max}}{m_0},
    \end{equation}

    \noindent i.e. the probability grows $\mathcal{O}(N)$. Denote the number of samples once $q_1$ has been created by $N_{a_1}$. Promoting a new state with $\tau(q_1, \sigma_2, q_2)$ is then simply the sum of $N_{a_1}$ and $N_{a_2}$, i.e. the sum of linear growth. Similarly, the batch-size $B$ adds a constant of maximum $B-1$ strings in, and hence we get $\mathcal{O}(nBN)$. The rest of the proof follows by induction.
\end{proof}

\noindent We further want to know how many samples to read in order to bound the probability of an error. The bound on a false reject is a mathematical component of the test itself, the sample size will become more relevant when bounding false accepts. False rejects are bounded by the following theorem:

\begin{customtheorem}
    Given the Hoeffding-test as defined earlier, and assuming no collisions happened in the CMS, then the probability  of a false reject is kept below $2\alpha$.
\end{customtheorem}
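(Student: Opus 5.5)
The plan is to reduce the claim to Hoeffding's inequality applied to each of the two sampled frequencies, plus a union bound. Assuming no collisions in the CMS, every retrieved count equals the true count, so the left-hand side of Eq.~\ref{eq:hoeffding} is exactly $\left|\frac{x_i^l}{n_l}-\frac{x_i^r}{n_r}\right|$ computed from the raw samples. A false reject means the two states have the same underlying distribution, so for each string $x_i$ there is a common true probability $p_i$, and yet the test fires. First fix one element $x_i$ and treat the $n_l$ (resp.\ $n_r$) strings passing the left (resp.\ right) state as i.i.d.\ Bernoulli trials with success probability $p_i$. This is justified by the standing assumption of i.i.d.\ draws from a fixed $D$ and by the fact that, given the number of arrivals, the outgoing strings of a state are i.i.d.\ from $D_q$.

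Next apply the two-sided Hoeffding bound to each side separately. Set $\varepsilon_l=\sqrt{\frac{1}{2n_l}\log\frac{2}{\alpha}}$. Then
\begin{equation*}
P\left(\left|\frac{x_i^l}{n_l}-p_i\right|\geq \varepsilon_l\right)\leq 2e^{-2n_l\varepsilon_l^2}=\alpha,
\end{equation*}
and the analogous bound holds for the right state with $\varepsilon_r$. The threshold in the test is exactly $\varepsilon_l+\varepsilon_r=\sqrt{\frac12\log\frac{2}{\alpha}}\left(\frac{1}{\sqrt{n_l}}+\frac{1}{\sqrt{n_r}}\right)$. By the triangle inequality,
\begin{equation*}
\left|\frac{x_i^l}{n_l}-\frac{x_i^r}{n_r}\right|\leq\left|\frac{x_i^l}{n_l}-p_i\right|+\left|p_i-\frac{x_i^r}{n_r}\right|,
\end{equation*}
so the test can only reject if at least one of the two deviation events occurs. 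A union bound over these two events gives a rejection probability of at most $2\alpha$ for the element $x_i$. This mirrors the derivation in \cite{alergia_1994}, and it matches the remark after Alg.~\ref{alg:consistency} that the wrong-rejection probability is bounded by $2\alpha$.

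The main obstacle is that Alg.~\ref{alg:consistency} rejects if the inequality fails for \emph{any} $\sigma\in S$, so a naive union bound over all elements would give $2\alpha|S|$ rather than $2\alpha$. I would interpret the statement, as in Alergia, as a per-comparison guarantee: the bound $2\alpha$ holds for each individual test on a string $x_i$. I would state explicitly that controlling the whole family of tests requires either replacing $\alpha$ by $\alpha/|S|$ (with $|S|=\mathcal{O}(|\Sigma|^{F_s})$, or $\mathcal{O}(|\Sigma|^{l_m})$ under CSS-MinHash) or accepting a family-wise bound of $2\alpha|S|$.

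A secondary subtlety is that $n_l,n_r$ are themselves random. To handle this I would condition on their values; Hoeffding's inequality holds for every fixed value, so the bound survives unconditioning.
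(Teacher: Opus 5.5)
Your proposal is correct and takes the same approach as the paper, which likewise cites the Alergia derivation, applies Hoeffding's inequality to each sampled frequency around the common expectation with failure probability $\alpha$, and combines the two sides to get $2\alpha$. Your bookkeeping is tighter than the paper's: your $\varepsilon_l=\sqrt{\frac{1}{2n_l}\ln\frac{2}{\alpha}}$ matches the test's threshold exactly, whereas the paper's displayed deviation $\sqrt{\frac{2}{m}\ln\frac{2}{\alpha}}$ does not, and you rightly flag that $2\alpha$ is a per-string guarantee rather than a bound over all of $S$.
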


\begin{proof}
    The proof can be extracted from~\citep{alergia_1994}. Given the original Hoeffing-bound and analyzing the error we get for random variable $X$ and sampled distribution $\frac{x}{m}$ the inequality

    \begin{equation}
        P\left(\left| \frac{x}{m}-\mathbb{E}[X] \right|\geq \sqrt{\frac{2}{m}\ln{\frac{2}{\alpha}}}\right) \leq \alpha.
    \end{equation}

    \noindent Hence, assuming that two sampled distributions $\frac{x}{m_1}$ and $\frac{y}{m_2}$ are equal in expectation the error of the Hoeffding-test is kept below $2\alpha$. 
\end{proof}

\noindent The false accept we will do in two steps: The following theorem again starts with the assumption that no collision inside the CMS happens, and we will expand on it allowing for collisions after that.

\begin{customtheorem}\label{theorem:false_reject_bound_no_collision}
    Assuming the target PDFA we are trying to learn is $\mu$-distinguishable, there exists a minimum sample size $m_0$ per state such that the probability of a false accept per state-pair is bounded by $\frac{\delta'}{2n^2|\Sigma|^2}$ under the assumption of no collisions.
\end{customtheorem}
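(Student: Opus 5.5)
The plan is to show that, once both states in a pair have enough samples, two things happen together with high probability. The Hoeffding threshold on the right-hand side of Eq.~\ref{eq:hoeffding} falls well below $\mu$. At the same time, the empirical prefix frequencies of two distinct target states remain far enough apart that at least one string makes the test fail.

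First, $\mu$-distinguishability gives a witness. For a pair of distinct target states $q_1,q_2$ there is a string $x$ with $|D_{q_1}(x\Sigma^*) - D_{q_2}(x\Sigma^*)| \geq \mu$. Since we assume no collisions, the sketch returns the exact sampled counts, so $\hat c_x/n_{q_i}$ is the empirical frequency of $x$ among the $n_{q_i}$ strings passing through $q_i$. I will take the Hoeffding-test on the single coordinate $x$ as the event that certifies a reject. A false accept therefore requires the test to pass on $x$.

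Second, bound the probability that the test passes on $x$. For $n_{q_1},n_{q_2} \geq m_0$, the threshold satisfies
\begin{equation}
\sqrt{\tfrac{1}{2}\log\tfrac{2}{\alpha}}\left(\tfrac{1}{\sqrt{n_{q_1}}}+\tfrac{1}{\sqrt{n_{q_2}}}\right) \leq 2\sqrt{\tfrac{1}{2m_0}\log\tfrac{2}{\alpha}}.
\end{equation}
I choose $m_0$ large enough that this is at most $\mu/2$. The test can then pass on $x$ only if one of the two empirical frequencies deviates from its mean by at least $\mu/4$. By the one-sided Hoeffding inequality applied to each state, and a union bound over the two states, this has probability at most $2\exp(-2m_0(\mu/4)^2) = 2\exp(-m_0\mu^2/8)$. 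Setting this equal to $\frac{\delta'}{2n^2|\Sigma|^2}$ and solving gives
\begin{equation}
m_0 \geq \max\left\{ \frac{8}{\mu^2}\ln\frac{4n^2|\Sigma|^2}{\delta'},\; \frac{8}{\mu^2}\log\frac{2}{\alpha} \right\},
\end{equation}
which is the required minimum sample size. The data-propagation theorem (Samples processed) guarantees that each state eventually reaches $m_0$ samples. That theorem bounds how much of the stream must be read, but it is not needed for the per-pair bound itself.

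The main obstacle is that the witness $x$ must actually be among the strings the algorithm checks. The sketches only store futures up to length $F_s$, and the set $S$ contains only strings that have been observed. I would handle this in one of two ways. The first option is to assume the distinguishing prefix has length at most $F_s$, or to interpret $\mu$-distinguishability relative to the truncated future distribution. The second option is to note that if $D_{q_i}(x\Sigma^*) \geq \mu$ for some $i$, then with the same sample size $x$ is observed with overwhelming probability. A further subtlety is that the sample counts $n_{q_i}$ are themselves random. I would condition on them, since given $n_{q_i}$ the futures are i.i.d. from $D_{q_i}$, and this conditioning leaves the Hoeffding step intact.
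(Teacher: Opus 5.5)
Your proof is correct under the same idealised sampling model the paper uses, but it takes a different route. The paper's proof of this theorem uses no exponential tail. It first bounds the variance of the collision terms $C_{i,k}$ and then discards that bound as too loose. It then takes Alergia's bound $\mathrm{Var}(f_1-f_2)\le\frac{1}{4m_1}+\frac{1}{4m_2}\le\frac{1}{2m_0}$ and applies the one-sided Chebyshev--Cantelli inequality. This gives a per-cell false-accept probability of at most $f(m_0)=\frac{1/(2m_0)}{1/(2m_0)+(\mu-\sqrt{(2/m_0)\ln(2/\alpha)})^2}$. The paper then only argues that a suitable $m_0$ exists because $f(m_0)\to 0$. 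It demands $f(m_0)\le\frac{\delta'}{2n^2|\Sigma|^2(wd)^2F_s}$ and union-bounds over $(wd)^2$ cell pairs and $F_s$ sketches.

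You instead split the gap $\mu$ into a threshold of at most $\mu/2$ plus two one-sided deviations of $\mu/4$, and apply Hoeffding. You also observe, correctly, that a false accept requires the test to pass on the single witness, so no union bound over coordinates is needed. This gives you an explicit $m_0=\frac{8}{\mu^2}\max\{\ln\frac{4n^2|\Sigma|^2}{\delta'},\log\frac{2}{\alpha}\}$, which is logarithmic in $1/\delta'$ and independent of $w,d,F_s$. By contrast, $f(m_0)\approx\frac{1}{2m_0\mu^2}$ forces the paper's $m_0$ to be of order $\frac{(wd)^2F_sn^2|\Sigma|^2}{\mu^2\delta'}$. That size then propagates into the batch-size requirement $B\ge\frac{4m_0n|\Sigma|}{\epsilon}$. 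The only advantage of the paper's route is that it needs just a variance bound, i.e.\ second moments rather than full independence of the samples.

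Two remarks on your final paragraph. First, your two options for the witness are complementary, not alternatives.
\begin{itemize}
\item Option (b) is already contained in your Hoeffding event. If $x$ was never observed, the empirical frequency at the state with $D_{q_i}(x\Sigma^*)\ge\mu$ is $0$, which is a deviation of at least $\mu$.
\item Option (b) does nothing for witnesses longer than $F_s$, so assumption (a) is genuinely required. The paper makes that assumption silently, by simply positing a checked coordinate with true gap $\mu$.
\end{itemize}
Second, your claim that, given $n_{q_i}$, the futures are i.i.d.\ from $D_{q_i}$ is exact for an unmerged prefix-tree node. For a merged red state, however, a string that re-enters it through a loop contributes several dependent futures. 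The paper's variance bound rests on the same idealisation, so this is not a weakness of your route relative to the paper's.
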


\begin{proof}
    At first we need to bound the variance of $C_{i,j}$. We can show that 

    \begin{equation}
        Var(C_{i,k}) = \sum_{j=1}^w x_j^2 Var(I_{i,j,k}) + \sum_{i\neq j} x_i x_j Var(I_{i,j,k}) \leq w\frac{1}{w}\left(1-\frac{1}{w}\right)\left(\sum_{j=1}^w x_j^2 + \sum_{i\neq j} x_i x_j\right).
    \end{equation}

    \noindent In this equation we implicitly turned a covariance into a variance, since the random variable is the same, and the variance term resolved to the variance of the binomial distribution with probability of a hit of $p=\frac{1}{w}$. Let us denote the random variable representing the frequency via $C_{i,k}^\nu=\frac{C_{i,k}}{m}$, with $m$ the number of samples in the sampled distribution. Then

    \begin{equation}
        Var(C_{i,k}^\nu) \leq \left(1-\frac{1}{w}\right)\left(\sum_{j=1}^w \frac{x_j^2}{m^2} + \sum_{i\neq j} \frac{x_i x_j}{m^2}\right) \leq 2\left(1-\frac{1}{w}\right).
    \end{equation}

    \noindent The last inequality comes from $\sum_{i}x_i^2 \leq \left(\sum_{i}x_i\right)^2$ and the Cauchy-Schwarz inequality. Taking the left hand side of the Hoeffing-bound $X_{i,k}^\nu - Y_{i,k}^\nu$, both constructed like $C_{i,k}^\nu$ but for two different distributions $D_1$ and $D_2$, we get

    \begin{equation}
        Var(X_{i,k_1}^\nu - Y_{i,k_2}^\nu) = Var(X_{i,k_1}^\nu) + Y_{i,k_2}^\nu - Cov(X_{i,k_1}^\nu, Y_{i,k_2}^\nu) \leq Var(X_{i,k}^\nu) + Y_{i,k}^\nu \leq 4\left(1-\frac{1}{w}\right).
    \end{equation}

    \noindent The covariance term cancels out because we require that the hash functions be mutually independent, i.e. $Cov(X_{i,k_1}^\nu, Y_{i,k_2}^\nu) = 0 \text{ }\forall k_1 \neq k_2$, and $Cov(X_{i,k_1}^\nu, Y_{i,k_2}^\nu) > 0 \text{ iff } k_1 = k_2$. Therefore, $Cov(X_{i,k_1}^\nu, Y_{i,k_2}^\nu) \geq 0 \text{ } \forall i, k_1, k_2$. Additionally, we do get the following upper bound for the variance of the Hoeffing-test from~\citep{alergia_1994}:

    \begin{equation}\label{eq:variance_bound_alergia}
        Var(f_1 - f_2) \leq \frac{1}{4m_1} + \frac{1}{4m_2} \leq \frac{1}{2m_0}.
    \end{equation}

    \noindent We can easily verify that the latter bound from Eq.~\ref{eq:variance_bound_alergia} is the sharper bound, even for a sample size of $m_0=1$ (a state with size $m=0$ cannot exist in our framework by design). In order to have a false accept, we do have that the two distributions are unequal by at least $\mu$, yet we do satisfy the bounds. Hence, we need to have the scenario that the true distribution $|D_1(i) - D_2(i)| = \mu$, but the expression $|\delta f^\nu| = \left|\min_{k_1}X_{i,k_1}^\nu - \min_{k_2}Y_{i,k_2}^\nu\right| < \sqrt{2\ln\left(\frac{2}{\alpha}\right)}\left(\frac{1}{\sqrt{m_1}} + \frac{1}{\sqrt{m_2}}\right)$. From the Chebychev-Cantelli inequality for lower bounds $P(X - \mathbb{E}[X] \leq -\lambda) \leq \frac{Var(X)}{Var(X) + \lambda^2}$ and the fact that $\mu$ represents a lower bound on distinguishability we get

    \begin{equation}
        P\left(|\delta f^{\nu}| - \mu \leq -\left(\mu - \sqrt{\frac{2}{m_0}\ln\left(\frac{2}{\alpha}\right)}\right)\right) \leq \frac{\frac{1}{2m_0}}{\frac{1}{2m_0} + \left(\mu - \sqrt{\frac{2}{m_0}\ln\left(\frac{2}{\alpha}\right)}\right)^2}.
    \end{equation}

    \noindent We denote the term $f(m_0)=\frac{\frac{1}{2m_0}}{\frac{1}{2m_0} + \left(\mu - \sqrt{\frac{2}{m_0}\ln\left(\frac{2}{\alpha}\right)}\right)^2}$ and require $f(m_0)\leq \frac{\delta'}{2n^2|\Sigma|^2 (wd)^2 F_s}$. It is easy to see that $\lim_{m_0\rightarrow \infty}f(m_0)\rightarrow 0$, hence $\exists m_0 \text{ s.t. } \forall \delta'>0: f(m_0)\leq \frac{\delta'}{2n^2|\Sigma|^2 (wd)^2 F_s}$. The final statement then follows from the fact that we have a maximum of $(wd)^2$ possible cell-pairs to check on $F_s$ sketches, and a union bound brings us to to the probability of a false accept to $\frac{\delta'}{2n^2|\Sigma|^2}$.
    
\end{proof}

\noindent Analyzing the term $f(m_0)$ we find that it has a maximum and then converges to zero. We plot the function in Fig.~\ref{fig:f_m_zero} for different values of $\mu$. In order to refer to the desired $m_0$ from here on we denote $m_0 = \min_m \{m\in \mathbb{N} \text{ }| \text{ }f(m') \leq \frac{\delta'}{2n^2|\Sigma|^2 (wd)^2 F_s} \text{ } \forall m' \geq m \}$. However, convergence to zero is also mathematically trivial to show, assuming that $\mu>0$, an assumption which is inherently always true to the problem (it does not make sense for $\mu$ to be zero or negative).

\vspace{\baselineskip}

\noindent Now that we know we can bound the errors when no collisions happen, i.e. when the CMS return the true counts, we have to look at what happens when the counts that the CMS return are approximates.

\begin{customtheorem}
    Given the Hoeffding-test and given two frequencies estimated via CMS, allowing for collisions. Then the probability of a false accept under an updated Hoeffding-bound $\sqrt{\frac{1}{2}log\left(\frac{2}{\alpha}\right)}\left(\frac{1}{\sqrt{m_1}}+\frac{1}{\sqrt{m_2}}\right) - \beta$ is bounded by $\frac{\delta_2'}{2n^2|\Sigma|^2}$.
\end{customtheorem}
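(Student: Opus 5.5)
The plan is to reduce the collision case to the collision-free case of Theorem~\ref{theorem:false_reject_bound_no_collision}, using Theorem~\ref{theorem:upper_bound_hoeffding} and Theorem~\ref{theorem:hoeffing_test_lower_bound} to control how far the CMS estimates can move the left-hand side of the test. Fix a state pair whose true distributions are $\mu$-distinguishable, and fix a string $\sigma_i$ that witnesses this, so $|D_1(\sigma_i)-D_2(\sigma_i)|\geq\mu$. Write $\delta f$ for the true frequency difference and $|\delta \hat f|$ for the difference computed from the retrieved counts. A false accept on this pair means the test with the updated threshold $T-\beta$ passes, where $T=\sqrt{\frac{1}{2}\log(\frac{2}{\alpha})}(\frac{1}{\sqrt{m_1}}+\frac{1}{\sqrt{m_2}})$. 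In particular it passes on $\sigma_i$, so $|\delta\hat f|<T-\beta$.

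I would split the failure event into two parts.

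\textbf{(a) Underestimation.} The CMS underestimates the left-hand side by more than $\beta$, i.e. $|\delta\hat f|<|\delta f|-\beta$. By Theorem~\ref{theorem:hoeffing_test_lower_bound} this has probability at most $2\gamma$ per retrieved string.

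\textbf{(b) No large underestimation.} Otherwise $|\delta\hat f|\geq|\delta f|-\beta$. Combined with $|\delta\hat f|<T-\beta$, this gives $|\delta f|<T$. That is exactly a false accept of the exact-count test, which is the event Theorem~\ref{theorem:false_reject_bound_no_collision} bounds by $\frac{\delta'}{2n^2|\Sigma|^2}$ once every state has size at least $m_0$. The point of subtracting $\beta$ from the threshold is that the shift by $\beta$ cancels, so we land exactly in the collision-free setting.

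Next, I would union bound over everything the test retrieves. That is at most $(wd)^2F_s$ cell pairs, or equivalently the number of strings in $S$, across the $F_s$ sketches. The total false-accept probability per pair is then at most $\frac{\delta'}{2n^2|\Sigma|^2}+2\gamma\,(wd)^2F_s$. To obtain the stated form, I would choose the CMS parameter $\gamma$ small enough that $2\gamma(wd)^2F_s\leq\frac{\delta''}{2n^2|\Sigma|^2}$ and set $\delta_2'=\delta'+\delta''$. Since $d=\lceil\ln\frac1\gamma\rceil$, this only requires $\gamma\,\ln^2(1/\gamma)$ to be small, which is always achievable. Alternatively, I would rescale $\delta'$ in the choice of $m_0$ to absorb the factor.

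The main obstacle is step (b). There I need the shifted threshold $T-\beta$ to remain meaningful, so I must require $m_0$ large enough that $T>\beta$; otherwise the test trivially rejects everything, which is harmless for false accepts but should be noted. More delicately, Theorem~\ref{theorem:false_reject_bound_no_collision} was derived through the Cantelli bound with $\mu-\sqrt{\frac{2}{m_0}\ln\frac{2}{\alpha}}>0$. I will verify that the event in (b) is contained in the event bounded there, rather than redoing the variance computation. If that containment is awkward, the fallback is to repeat the Cantelli argument with $\mu$ replaced by $\mu-\beta$, i.e.\ to redefine $f(m_0)$ with the shifted gap. That works provided $\beta<\mu$, which I would then add as an assumption on the sketch width.
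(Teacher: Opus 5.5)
Your reduction is the one the paper uses. Theorem~\ref{theorem:hoeffing_test_lower_bound} limits how far the sketch can shrink the left-hand side, and the $-\beta$ in the threshold sends you back to the collision-free Theorem~\ref{theorem:false_reject_bound_no_collision}. The two arguments differ in how the failure probabilities are combined. The paper says that ``by independence'' one may take $\delta_2'=2\gamma\delta'$, i.e.\ it multiplies the sketch failure probability by the sampling failure probability. You instead note that a false accept lies in the union of your events (a) and (b), and you add the two probabilities.

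Your combination is the one that is justified. The false-accept event contains outcomes where the sketch is accurate and only the sample misleads, and these lie outside the intersection that a product measures. The product would even let the false-accept probability vanish by deepening the sketches alone, with no additional samples. Your union bound also needs no independence between hash functions and samples. Your step (b) makes explicit the containment that the paper leaves implicit in ``inserting the updated bound''.

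Two further remarks. First, the containment you flag as the main obstacle does hold, so you need neither the fallback nor $\beta<\mu$ for this bound. For $m_1,m_2\geq m_0$ one has $T\leq 2\sqrt{\frac{1}{2}\ln\frac{2}{\alpha}}/\sqrt{m_0}=\sqrt{\frac{2}{m_0}\ln\frac{2}{\alpha}}$. Hence $\{|\delta f|<T\}$ is contained in the event $\{|\delta f|\leq\sqrt{\frac{2}{m_0}\ln\frac{2}{\alpha}}\}$ that the Cantelli step in the collision-free theorem bounds.

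Second, the union over everything the test retrieves is unnecessary. Also, ``$(wd)^2F_s$ cell pairs, or equivalently the number of strings in $S$'' is not an equivalence: Theorem~\ref{theorem:hoeffing_test_lower_bound} is a per-string statement, and $|S|$ can be of order $|\Sigma|^{F_s}$. The witness $\sigma_i$ is fixed by the target before any sampling or hashing, and a false accept requires the test to pass on $\sigma_i$. So the sketch contributes only $2\gamma$, and choosing $2\gamma\leq\frac{\delta''}{2n^2|\Sigma|^2}$ suffices.
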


\begin{proof}
     Following the proofs of Theorem~\ref{theorem:hoeffing_test_lower_bound} and Theorem~\ref{theorem:false_reject_bound_no_collision} the proof is trivial. We bounded the left-hand-side of the Hoeffding-test to $|\delta f^\nu| \geq \mu - \beta$ by at max $2\gamma$, and inserting the updated bound and, by independence, updating $\delta_2' = 2\gamma\delta'$ into the proof of Theorem~\ref{theorem:false_reject_bound_no_collision} gives us the result.
\end{proof}

\begin{customtheorem}
    Narrowing the Hoeffding-test via $\sqrt{\frac{1}{2}log\left(\frac{2}{\alpha}\right)}\left(\frac{1}{\sqrt{m_1}}+\frac{1}{\sqrt{m_2}}\right) - \beta$, the upper bound for a false reject remains bounded by a maximum of $2\alpha$.
\end{customtheorem}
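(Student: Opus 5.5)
The idea is to show that narrowing the threshold by $\beta$ is exactly compensated by the at most $\beta$ overestimation a CMS can cause on the left-hand side. Consider a pair of states whose true distributions coincide, so a reject is false. Denote the Hoeffding threshold by
\begin{equation}
T(m_1,m_2)=\sqrt{\tfrac{1}{2}\log\left(\tfrac{2}{\alpha}\right)}\left(\tfrac{1}{\sqrt{m_1}}+\tfrac{1}{\sqrt{m_2}}\right).
\end{equation}
Write $|\delta f|$ for the left-hand side computed from the exact counts and $|\delta \hat f|$ for the one computed from the CMS estimates. A false reject under the narrowed test is the event $|\delta \hat f| > T - \beta$. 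I would split this event according to whether the sketch error exceeds $\beta$ or not.

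\emph{Step 1 (good sketch event).} Let $G$ be the event that the CMS perturbation of the left-hand side is at most $\beta$, i.e.\ $|\delta \hat f| \leq |\delta f| + \beta$. By Theorem~\ref{theorem:upper_bound_hoeffding}, with the placeholder constant taken as the true difference (here $0$, as its footnote permits), $G$ has probability at least $1-2\gamma$.

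\emph{Step 2 (reduce to the collision-free test).} On $G$ we have $|\delta \hat f| > T-\beta$ implies $|\delta f| > T - 2\beta$. On the other hand, $|\delta \hat f| > T$ implies $|\delta f| > T - \beta$. Both of these give a threshold slightly below $T$, which is weaker than the collision-free theorem's threshold $T$.

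\emph{Step 3 (main obstacle).} The $\beta$ shift must be absorbed. I would first use the one-sidedness of the CMS: estimates only overestimate, and the bound $\hat c \le c + \beta\sum c$ is normalised by $n_q$. I would then argue, following the case split in the proof of Theorem~\ref{theorem:upper_bound_hoeffding}, that the inflation of one frequency enlarges $|\delta f|$ only in the case already counted by the $2\gamma$ failure event. So on $G$ the estimated and true left-hand sides differ by at most $\beta$ in the direction the narrowed threshold accounts for. I expect it will be necessary to interpret the narrowed test as comparing against $T$ after subtracting the worst-case sketch inflation, so that on $G$ the event reduces exactly to $|\delta f| > T$. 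The theorem on false rejects without collisions then bounds this event by $2\alpha$.

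\emph{Step 4 (combine).} Conditioning on $G$ gives a false-reject probability of at most $2\alpha$. If one does not want to absorb the failure of $G$ into the $\gamma$ budget already used for false accepts, the unconditional bound is $2\alpha + 2\gamma$. I would present the result as $2\alpha$ on the sketch-success event, consistent with how the preceding theorem folded $\gamma$ into $\delta_2'$.

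The weak point is Step 3. Honestly, shrinking the threshold can only increase false rejects, so a bound of exactly $2\alpha$ needs either the conditioning on $G$ or an accounting of $\beta$ against the slack in the Hoeffding constant. The statement appears to intend the former, and I would make that explicit.
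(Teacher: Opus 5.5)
The gap is in Step 3, and it cannot be closed. Reading the narrowed test as ``subtract the worst-case inflation $\beta$ from the estimated left-hand side and compare with $T$'' gives the rejection event $|\delta\hat f|>T+\beta$. That is a \emph{widened} threshold, not the $T-\beta$ of the statement. For the widened test your argument works: on $G$, $|\delta\hat f|>T+\beta$ forces $|\delta f|>T$, so the collision-free theorem gives $2\alpha$, or $2\alpha+2\gamma$ unconditionally. For the narrowed test, however, $G$ itself still permits an inflation of up to $\beta$. So the claim that inflation matters ``only in the case already counted by the $2\gamma$ failure event'' is false. On $G$ a false reject only forces $|\delta f|>T-2\beta$. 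Even with exact counts and no collisions at all, it is the event $|\delta f|>T-\beta$, which contains the event $|\delta f|>T$ that the collision-free theorem controls. Conditioning on $G$ removes the $2\gamma$ failure, not the $\beta$ shift.

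In fact no argument can give $2\alpha$ here, because the statement fails. Write $\epsilon_i=\sqrt{\frac{1}{2m_i}\ln\frac{2}{\alpha}}$, so $T=\epsilon_1+\epsilon_2$. Then $T\to 0$ as the state sizes grow, while $\beta>0$ is fixed once the sketch width $w=\lceil e/\beta\rceil$ is. As soon as $T<\beta$ the narrowed threshold is negative, so every pair of states is rejected, including pairs with identical distributions, and the false-reject probability is $1$. So no slack in the Hoeffding constant can absorb a fixed $\beta$ uniformly in $m_1,m_2$.

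The paper's own one-line proof does not supply the missing idea either. It argues that a smaller right-hand side cannot increase the probability of a false \emph{accept}, which is the other error type. For false rejects the monotonicity runs the opposite way, exactly as you note at the end.

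Two weaker statements are provable. One is your $2\alpha+2\gamma$ bound for the widened test, though that gives up the false-accept protection that motivated the narrowing. The other, for the narrowed test with $\epsilon_i>\beta$, is $2\gamma+2e^{-2m_1(\epsilon_1-\beta)^2}+2e^{-2m_2(\epsilon_2-\beta)^2}$, which exceeds $2\alpha$.
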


\begin{proof}
    A false accept happens when the true distribution is larger, but the sampled distribution is smaller than the right-hand-side of the Hoeffding-test. A smaller right-hand-side cannot increase the probability of a false accept.
\end{proof}

\noindent In our algorithm we can control $\beta$ and $\gamma$ directly via choosing appropriate dimensions for the CMS. It is desirable and necessary for the guarantees to hold that $\beta < \mu$\footnote{From a mathematical perspective, because elsewise we would be dealing with zero or negative probabilities.}, which sets a lower bound on the width of the sketches. The depth directly influences the probability of mistakes made, as shown above. Analogue to above we want $\alpha$ to satisfy $\alpha \leq \frac{\delta_2'}{2n^2|\Sigma|^2 (wd)^2 F_s}$ in order to give the formal bounds below.

\begin{figure}[!ht]
    \centering
    \includegraphics[width=0.8\textwidth]{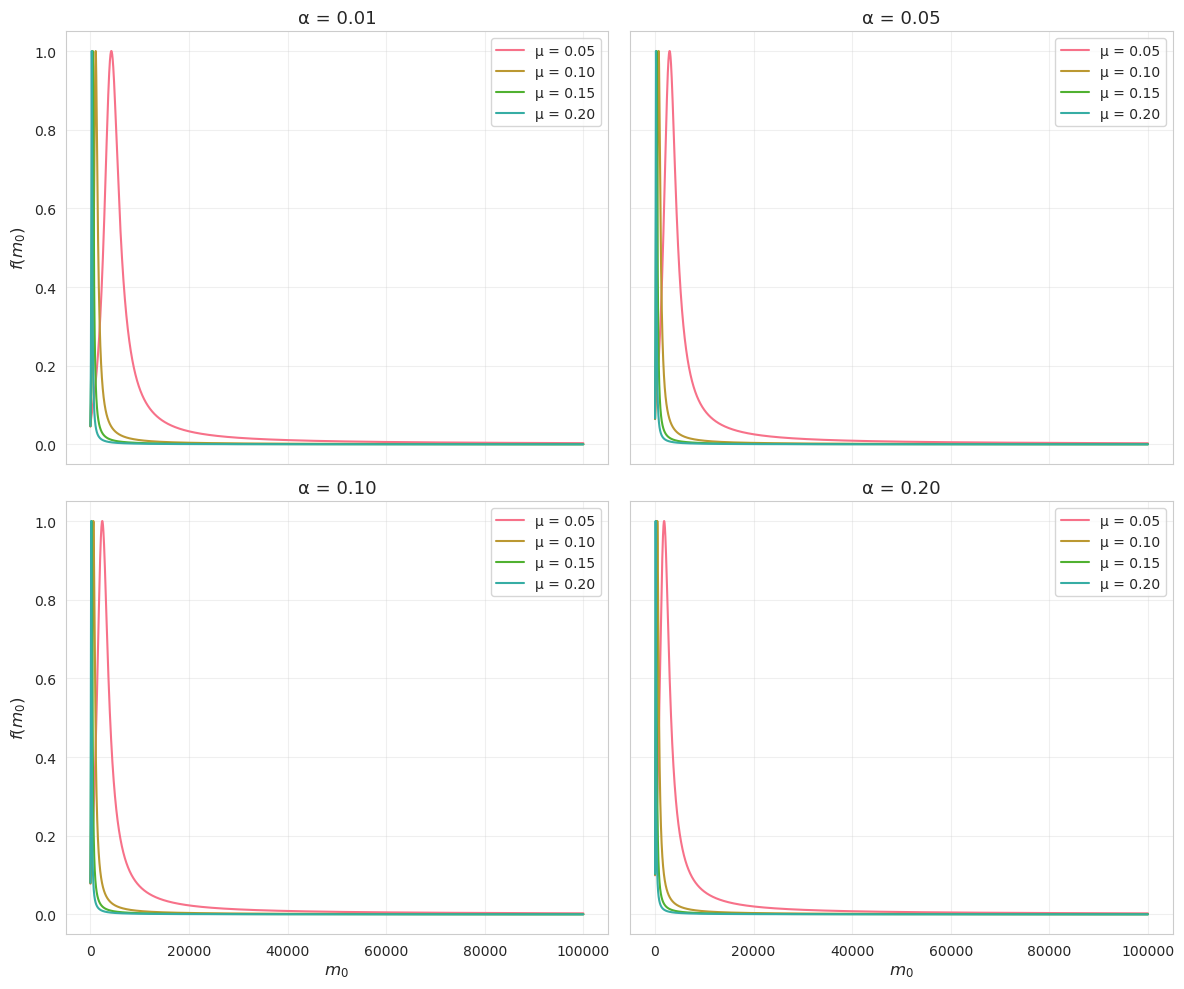}
    \caption{The function $f(m_0)$ with varying $\mu$ and $\alpha$.}
    \label{fig:f_m_zero}
\end{figure}

\subsection{Bounds on graph creation}

Now that we can bound the statistical test using the CMS we need to investigate what happens when creating the graph from the set of individual states. In the next step we do need a bound on the batch size. We follow Palmer et al. and merely repeat their Proposition 1~\citep{palmer_pac_paper} in our next theorem, which gives us a lower bound on the batch size in order to probabilistically guarantee the existence of desired states:

\begin{customtheorem}
    Let $A'$ be a PDFA whose states and transitions are a subset of those of $A$. $q_0$ is a state of $A'$. Suppose $q$ is a state of $A'$ but $\tau(q, \sigma)$ is not a state of $A'$. Let $S$ be an i.i.d. drawn sample from $D_A$, $|S|\geq \frac{8n^2|\Sigma|^2}{\epsilon^2}\ln \left( \frac{2n^2|\Sigma|^2}{\delta'} \right)$. Let $S_{q, a}(A')$ be the number of elements of $S$ of the forms $\sigma_1 a \sigma_2$, where $\tau(q_0, \sigma_1)=q$ and for all prefixes $\sigma_1'$ of $\sigma_1$: $\tau(q_0,\sigma_1')\in A$. Then

    \begin{equation}
        P\left(\left|\frac{S_{q,a}(A')}{|S|} - \mathbb{E}\left[\frac{S_{q,a}(A')}{|S|}\right]\right| \geq \frac{\epsilon}{8n|\Sigma|}\right) \leq \frac{\delta'}{2n^2|\Sigma|^2}.
    \end{equation}
\end{customtheorem}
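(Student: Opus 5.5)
The statement is a concentration bound for an empirical frequency, so I would prove it with Hoeffding's inequality applied to an average of i.i.d. bounded indicator variables.

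\emph{Step 1: write the statistic as an average.} For each string $\sigma^{(t)}$ in $S$, $t=1,\dots,|S|$, let $X_t=1$ if $\sigma^{(t)}$ has the form $\sigma_1 a\sigma_2$ with $\tau(q_0,\sigma_1)=q$ and all prefixes of $\sigma_1$ leading to states of $A$, and $X_t=0$ otherwise. Then
\[
\frac{S_{q,a}(A')}{|S|}=\frac{1}{|S|}\sum_{t=1}^{|S|}X_t .
\]
One subtlety is that a single string could contain several occurrences of such a decomposition. I would read $S_{q,a}(A')$ as counting \emph{elements} of $S$, as the statement says, so each string contributes at most one. Then $X_t\in\{0,1\}$.

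\emph{Step 2: check independence.} The event $X_t=1$ depends only on $\sigma^{(t)}$ and the fixed automaton $A'$, and the strings are drawn i.i.d. from $D_A$. Hence the $X_t$ are i.i.d. Bernoulli variables with common mean $\mathbb{E}[S_{q,a}(A')/|S|]$.

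\emph{Step 3: apply Hoeffding.} The two-sided Hoeffding inequality gives
\[
P\left(\left|\frac{S_{q,a}(A')}{|S|}-\mathbb{E}\left[\frac{S_{q,a}(A')}{|S|}\right]\right|\geq t\right)\leq 2\exp\left(-2|S|t^2\right).
\]
Set $t=\epsilon/(8n|\Sigma|)$, which makes the exponent $-2|S|\epsilon^2/(64n^2|\Sigma|^2)$. We then need
\[
2\exp\left(-\frac{|S|\epsilon^2}{32n^2|\Sigma|^2}\right)\leq\frac{\delta'}{2n^2|\Sigma|^2},
\quad\text{i.e.}\quad
|S|\geq\frac{32n^2|\Sigma|^2}{\epsilon^2}\ln\left(\frac{4n^2|\Sigma|^2}{\delta'}\right).
\]

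\emph{Step 4 (main obstacle): reconcile constants.} The sample size stated in the theorem, $\frac{8n^2|\Sigma|^2}{\epsilon^2}\ln\frac{2n^2|\Sigma|^2}{\delta'}$, is smaller than what Step 3 requires by a factor of about $4$ in front and has $2$ rather than $4$ inside the logarithm. So plain Hoeffding with this deviation does not reproduce the statement verbatim. Since the theorem is explicitly Proposition~1 of \citep{palmer_pac_paper}, I would first check their normalization of $\epsilon$ and of the deviation threshold, which may differ by such factors. If they match the statement, I would take the constants as given there. Otherwise I would present the argument with the constants obtained in Step 3, noting that they change only constant factors and do not affect polynomiality of the sample bound used later.
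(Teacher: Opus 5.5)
Your route matches the paper's. The paper gives no argument of its own and defers to Proposition~1 of \citep{palmer_pac_paper}, which is a concentration bound for the i.i.d.\ per-string indicators $X_t$, as in your Steps~1--3.

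Your Step~4 concern is justified. No renormalization or sharper inequality removes it, so take your second option. For example, let $A'=\{q_0\}$ and $q=q_0$, with $\eta(q_0)=\lambda(q_0,a)=\frac{1}{2}$, $\tau(q_0,a)\notin A'$, and $q_0$ never re-entered.
\begin{itemize}
\item Then $X_t\sim\mathrm{Bernoulli}(1/2)$, and the standard deviation of $S_{q,a}(A')/|S|$ is $1/(2\sqrt{|S|})$.
\item At the printed sample size, the threshold $t=\epsilon/(8n|\Sigma|)$ is only $2t\sqrt{|S|}=\sqrt{\frac{1}{2}\ln(2n^2|\Sigma|^2/\delta')}$ standard deviations from the mean.
\item By the normal approximation, the deviation probability is of order $(\delta'/(2n^2|\Sigma|^2))^{1/4}$ up to lower-order factors. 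Rigorously, this follows from the standard lower bound on binomial point probabilities via $\mathrm{KL}(\frac{1}{2}+t\,\|\,\frac{1}{2})=2t^2+O(t^4)$. This is far above the claimed $\delta'/(2n^2|\Sigma|^2)$, already for moderate $\delta'$.
\end{itemize}
So the statement is false as printed. The provable version is your Step~3 bound, $|S|\geq\frac{32n^2|\Sigma|^2}{\epsilon^2}\ln\frac{4n^2|\Sigma|^2}{\delta'}$. The printed size is exactly what one-sided Hoeffding gives at deviation $\epsilon/(4n|\Sigma|)$. The batch-size condition on $B$ later in the paper must be enlarged by the same constant factor. As you note, this leaves polynomiality intact.

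Finally, your Step~1 subtlety disappears once the prefix condition is read as $\tau(q_0,\sigma_1')\in A'$; the printed ``$\in A$'' makes the condition vacuous. After a string takes $a$ out of $q$, it sits at $\tau(q,a)\notin A'$. Hence the decomposition $\sigma_1a\sigma_2$ is unique, and $X_t$ is automatically an indicator.
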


\begin{proof}
    The proof can be found in~\citep{palmer_pac_paper}.
\end{proof}

\noindent Given the statements above, we can give formal guarantees on our algorithm by choosing the inputs $\alpha, \beta, \gamma$ appropriately. We do omit the argumentation and the proofs, as they can be extracted exactly the same from~\citep{palmer_pac_paper}, and simply arrive at the next theorem, which is a direct copy of Theorem 2 of Palmer et al.\citep{palmer_pac_paper}:

\begin{customtheorem}
    Let our algorithm read samples with a batch-size of 
    
    \[
    B \geq max\left(\frac{8n^2|\Sigma|^2}{\epsilon^2}\ln\left(\frac{2n^2|\Sigma|^2}{\delta'}\right), \frac{4m_0n|\Sigma|}{\epsilon}\right),
    \]
    
    \noindent then there exists $T'$ a subset of the transitions of $A$, and $Q'$ a subset of the states of $A$, s.t. $\sum_{(q, \sigma) \in T'} D_A(q, \sigma) + \sum_{q\in Q'}D_A(q)\leq \frac{\epsilon}{2}$, and with probability at least $1-\delta'$, every transition $(q, \sigma) \not \in T'$ in target automaton $A$ for which $D_A(q, \sigma) \geq \frac{\epsilon}{4n|\Sigma|}$, has a corresponding transition in hypothesis automaton $H$, and every state $q\not\in Q'$ in target automaton $A$ for which $D_A(q)\geq \frac{\epsilon}{4n|\Sigma|}$, has a corresponding state in hypothesis automaton $H$.
\end{customtheorem}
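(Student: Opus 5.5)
The argument follows Palmer et al.'s Theorem 2 and builds the hypothesis incrementally. It also uses the streaming structure: after each batch the minimization routine rebuilds $H$ from the prefix tree. Fix a batch of size $B$ satisfying both lower bounds. The state-transition structure of the target $A$ is grown one transition at a time, as in the proof of Palmer et al. We maintain a partial PDFA $A'$ whose states are (images of) red states of $H$, each corresponding to a distinct state of $A$. This correspondence exists with high probability because every merge is correct: false rejects are controlled by the narrowed Hoeffding-test theorem (probability at most $2\alpha$ per test), and false accepts by the collision-aware false-accept theorem (at most $\frac{\delta_2'}{2n^2|\Sigma|^2}$ per pair). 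A union bound over the at most $n^2|\Sigma|^2$ candidate pairs keeps the total failure probability of the test stage below $\delta'/2$, given our choice $\alpha \le \frac{\delta_2'}{2n^2|\Sigma|^2(wd)^2F_s}$ and $\beta<\mu$.

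\textbf{Growth step.} Suppose $q\in A'$ and the transition $(q,a)$ is not yet represented. By the batch-size theorem (Palmer's Proposition 1), the first term of the bound on $B$ gives, with probability at least $1-\frac{\delta'}{2n^2|\Sigma|^2}$, that the empirical frequency $S_{q,a}(A')/|S|$ is within $\frac{\epsilon}{8n|\Sigma|}$ of its expectation. Hence every transition with $D_A(q,a)\ge\frac{\epsilon}{4n|\Sigma|}$ is observed with empirical mass at least $\frac{\epsilon}{8n|\Sigma|}$. The second term, $B\ge\frac{4m_0n|\Sigma|}{\epsilon}$, then gives at least $m_0$ occurrences, after accounting for the halving of the mass. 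With $m_0$ occurrences the node exceeds the threshold $t_S$, so it becomes blue in Algorithm~\ref{alg:streamingtree}. By the false-accept theorem it is then either correctly merged into the matching red state or promoted red. Either way the transition (and, if new, the state) of $A$ acquires a counterpart in $H$. Iterating this at most $n|\Sigma|$ times and union-bounding the sampling failures gives another $\delta'/2$.

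\textbf{Exceptional sets.} We define $T'$ and $Q'$ as the transitions and states whose mass falls below $\frac{\epsilon}{4n|\Sigma|}$, or which are only reachable through such transitions. Since there are at most $n|\Sigma|$ transitions and $n$ states, their total mass is at most $n|\Sigma|\cdot\frac{\epsilon}{4n|\Sigma|}+n\cdot\frac{\epsilon}{4n|\Sigma|}\le\frac{\epsilon}{2}$, which is the required bound. Adding the two failure budgets gives total probability at least $1-\delta'$.

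\textbf{Main obstacle.} The hardest step is transferring Palmer's argument, which assumes a single fixed sample and exact counts, to our streaming procedure with undo/redo and CMS approximations. First, one must ensure that the refinements replayed from $R_{old}$ do not preserve an earlier wrong merge. I would handle this by noting that each replayed refinement is re-checked for consistency on the current data, so the final hypothesis is determined by tests on the current batch's counts alone. Those tests are covered by the per-test bounds above. Second, one must check that conditioning the CMS collision events on the sample keeps the union bound valid. Here I would use the independence of the hash functions from the data, which was already used to set $\delta_2'=2\gamma\delta'$.
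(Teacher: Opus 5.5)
Your plan of rerunning Palmer et al.'s Theorem~2 on top of the test bounds is the route the paper takes; the paper offers nothing beyond that citation. However, your growth step misreads what the batch-size proposition delivers, and that breaks the exceptional-set part.

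The proposition concentrates $S_{q,a}(A')/|S|$ around its \emph{own} mean. $S_{q,a}(A')$ only counts strings that reach $q$ along a path inside the current partial automaton $A'$. The condition ``$\in A$'' in the restated proposition must read ``$\in A'$''; that is also the only count the algorithm can form, since it sees strings through $H$ and a prefix tree grown below red/blue nodes. With $D_A(q,a)$ read, as you do, as the probability that a string uses $(q,a)$, this mean is at most $D_A(q,a)$ and can be much smaller: part of the traffic through $(q,a)$ may arrive via transitions not in $A'$. So ``every transition with $D_A(q,a)\ge\frac{\epsilon}{4n|\Sigma|}$ is observed with empirical mass at least $\frac{\epsilon}{8n|\Sigma|}$'' does not follow.

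This mismatch is exactly what $T'$ and $Q'$ exist to absorb, and your choice does not absorb it:
\begin{itemize}
\item The low-mass members of your $T'$ are irrelevant, because the claim only concerns transitions with $D_A(q,\sigma)\ge\frac{\epsilon}{4n|\Sigma|}$. The members that matter are the frequent ones reachable only through low-mass transitions. Your weight bound charges each element at most $\frac{\epsilon}{4n|\Sigma|}$, which does not apply to them: each can carry the combined mass of all low-mass transitions feeding it.
\item Conversely, take a frequent transition out of a state reached along frequent transitions. Its within-$A'$ mass can stay below the detection level, with the rest of its mass coming from strings that left $A'$ and re-entered. It lies outside your $T'$, and nothing forces it into $H$.
\end{itemize}
The argument has to be run on the within-$A'$ means. $T'$ and $Q'$ must be fixed in advance (as the quantifiers require) and tied to what the run can miss. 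Their weight must be bounded through what termination actually controls, namely the within-$A'$ mass of the exits of $A'$, not elementwise. Separately, $\frac{4m_0n|\Sigma|}{\epsilon}\cdot\frac{\epsilon}{8n|\Sigma|}=\frac{m_0}{2}$, so your count step yields $m_0/2$ occurrences, not $m_0$.

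Your streaming transfer also rests on an inaccurate picture of the algorithm:
\begin{itemize}
\item The tree-streaming routine never resets the prefix tree. Only refinements are undone, and node counts and sketches accumulate across all batches. A node exists only if its parent was red or blue when strings passed, so what a node has seen depends on earlier hypotheses. The final tests therefore do not use ``the current batch's counts alone''. You need concentration for counts collected after a data-dependent creation time; this is fine, since strings after a stopping time are i.i.d.\ and independent of it.
\item Your union bound must range over every test in every minimization round of every batch. That is many more than $n^2|\Sigma|^2$ pairs, so the per-test budgets for $\alpha$ and $\delta_2'$ must be rescaled.
\item The Palmer-style conclusion needs that, when the algorithm stops, no exit of $A'$ still has large within-$A'$ mass. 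The streaming algorithm stops when $|H|=n$ or the stream is exhausted, so you have to prove this property rather than inherit it.
\end{itemize}
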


\noindent In other words, we are able to retrieve an isomorphic subgraph of the target automaton that holds transitions and nodes with probabilistic guarantees. From this theorem we are able to retrieve the following corollary. In the remainder of this analysis we will always assume a batch-size of at least $B\geq max\left(\frac{8n^2|\Sigma|^2}{\epsilon^2}\ln\left(\frac{2n^2|\Sigma|^2}{\delta'}\right), \frac{4m_0n|\Sigma|}{\epsilon}\right)$.

\begin{customcorollary}
    With high probability, the maximum number of iterations the algorithm performs is $n|\Sigma|$.
\end{customcorollary}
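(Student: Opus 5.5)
The plan is a potential-function argument: I would count the transitions of the target automaton $A$ that have been identified in the hypothesis $H$. There are at most $n|\Sigma|$ such transitions, since $A$ has at most $n$ states and each state has at most one outgoing transition per symbol (with $\xi$ absorbed into $\Sigma$). If I can show that, with high probability, every iteration of the minimization routine (one per batch of size $B$) either identifies at least one new transition of $A$ or causes termination, the bound of $n|\Sigma|$ iterations follows.

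\textbf{Step 1: a good event for each iteration.} Fix one iteration. By the assumed batch size $B\geq \max\left(\frac{8n^2|\Sigma|^2}{\epsilon^2}\ln\left(\frac{2n^2|\Sigma|^2}{\delta'}\right), \frac{4m_0n|\Sigma|}{\epsilon}\right)$, I would invoke the last theorem (Theorem 2 of Palmer et al.). It gives that, with probability at least $1-\delta'$, every transition outside $T'$ with $D_A(q,\sigma)\geq \frac{\epsilon}{4n|\Sigma|}$ that borders the current red subgraph has received at least $m_0$ samples. Its target state is then promoted blue, because of the threshold $t_S$ in Alg.~\ref{alg:streamingtree}. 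On the same event I would apply the bounds on the statistical test. A false accept has probability at most $\frac{\delta'_2}{2n^2|\Sigma|^2}$ per state pair. A false reject is controlled by the choice $\alpha \leq \frac{\delta_2'}{2n^2|\Sigma|^2 (wd)^2 F_s}$. A union bound over the at most $n^2|\Sigma|^2$ red/blue candidate pairs then shows that every merge or promotion performed is correct. Concretely, the red states stay in bijection with distinct states of $A$, and each blue state is either merged into its true red counterpart or promoted to red.

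\textbf{Step 2: progress and termination.} On this good event, I would argue that the refinements re-applied from $R_{old}$ are still correct, since consistency failures are only discarded and never forced. So the hypothesis of the next iteration contains the identified subgraph of the previous one, plus the transition(s) out of the newly created blue states. Hence the number of correctly identified transitions of $A$ strictly increases. The alternative is that no new frequent transition remains, in which case either the size of $H$ reaches $n$ or the remaining mass is covered by $T'$ and $Q'$. In both cases Alg.~\ref{alg:streamingtree} stops, or further iterations do not change $H$. Since the potential is bounded by $n|\Sigma|$, at most $n|\Sigma|$ iterations occur.

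\textbf{Step 3: union bound over iterations, and the main obstacle.} Finally, I would take a union bound over the at most $n|\Sigma|$ iterations. The failure probability is then at most $n|\Sigma|(\delta'+\delta_2')$, and rescaling $\delta'$ by a factor of $n|\Sigma|$ gives the ``high probability'' claim. I expect the main obstacle to be Step 2, namely showing that undoing and redoing refinements can never lose an already identified transition. In particular, structural failures in $R_{failed}$ must not cause a correct earlier merge to be dropped. I would handle this by noting that, on the good event, every refinement in $R_{old}$ is consistent with $A$. It can therefore fail only structurally, and once its blue state reappears it is retried by the loop over $R_{failed}$.
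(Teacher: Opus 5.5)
Your proposal follows the paper's argument. The paper likewise takes from the proof of Palmer et al.'s Theorem~2 that each batch yields at least one new state/transition of size at least $m_0$, charges this against the at most $n|\Sigma|$ transitions of the target, and asserts that undoing merges does not change this worst case; your treatment of $R_{old}$, $R_{failed}$ and the union bound over iterations spells that assertion out. Two caveats: first, the per-batch $m_0$-count should come from Palmer's Proposition~1 (the preceding theorem) together with the term $\frac{4m_0n|\Sigma|}{\epsilon}$ in $B$, not from the statement of the last theorem, which is only about the final hypothesis; second, since Alg.~\ref{alg:streamingtree} halts only when $|H|=n$ or the stream is empty, your potential bounds the iterations that change $H$, so bounding the iterations actually performed needs Palmer's stopping rule (halt once a batch produces no new large candidate), which the paper's sketch also assumes implicitly.
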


\noindent The proof can be extracted from the proof of Theorem 2 of \citep{palmer_pac_paper}. Essentially, they show that in each iteration at least one new state with a size of at least $m_0$ will appear. What is different in our case is that we can undo merges and re-compute them. The maximum worst case is however still that for each of the $n$ states we collapse the whole alphabet $|\Sigma|$ into it, i.e. the worst case does not change under these circumstances. With the maximum number of iterations we can retrieve bounds on run-time and memory consumption.

\begin{customcorollary}
    In expectation, the algorithm uses $\mathcal{O}\left( \frac{n|\Sigma|}{\beta} \left\lceil \frac{B}{m_0} \right\rceil\right)$ memory and takes $\mathcal{O}\left(n|\Sigma|B\right)$ amortized run-time. 
\end{customcorollary}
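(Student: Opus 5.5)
The plan is to combine the iteration bound of the preceding corollary (at most $n|\Sigma|$ iterations with high probability) with a per-iteration accounting of work and a count of how many states can be alive at once. Memory and run-time are handled separately. In both cases the argument works conditionally on the high-probability event of the previous corollary and of the batch-size theorem adapted from Palmer et al. The overall bound then holds in expectation, since the complementary event has probability at most $\delta'$ and only contributes a lower-order term.

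\textbf{Memory.} Each state carries $F_s$ Count-Min-Sketches of width $w=\lceil e/\beta\rceil$ and depth $d=\lceil\ln\frac{1}{\gamma}\rceil$, plus a final counter. Treating $F_s$ and $d$ as constants, this is $\mathcal{O}(1/\beta)$ memory per state. It therefore suffices to bound the number of states ever held in memory by $\mathcal{O}(n|\Sigma|\lceil B/m_0\rceil)$.

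\begin{enumerate}
\item By the streaming rule of Algorithm~\ref{alg:streamingtree}, new nodes are created only as children of red or blue nodes. A non-red/non-blue node with fewer than $t_S=m_0$ visits stops the traversal, so no deeper nodes grow from it.
\item Red states correspond to the at most $n$ states of the target automaton. Blue states are children of red states, so there are at most $n|\Sigma|$ of them.
\item Fix such a fringe position. Within one batch of $B$ strings, the white nodes hanging below it partition the strings reaching it. I would argue that at most $\lceil B/m_0\rceil$ of them can become heavy enough (size at least $m_0$) to spawn further children, while the others remain leaves. This gives $\mathcal{O}(\lceil B/m_0\rceil)$ relevant nodes per fringe position.
\end{enumerate}

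Multiplying the per-state cost by this state count yields the stated memory bound.

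\textbf{Run-time.} Processing a string costs $\mathcal{O}(F_s)$ per visited node by the run-time corollary of the heuristic. Since traversal stops at the fringe, the number of nodes visited per string is bounded by a constant depending on the depth of the red/blue core. Each iteration reads $B$ strings, so the tree updates cost $\mathcal{O}(B)$ per iteration. Over the $n|\Sigma|$ iterations this gives $\mathcal{O}(n|\Sigma|B)$.

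The minimization routine also has to be charged. Redoing and undoing a refinement costs $\mathcal{O}(1)$ in FlexFringe. Consistency checks cost $\mathcal{O}(|\Sigma|^{F_s})$, or $\mathcal{O}(|\Sigma|^{l_m})$ under CSS-MinHash, which is constant in $B$. The number of refinements per iteration is bounded by the number of red/blue states, $\mathcal{O}(n|\Sigma|)$. Because $B\geq \frac{4m_0 n|\Sigma|}{\epsilon}$, the minimization cost per iteration is dominated by the $\mathcal{O}(B)$ cost of reading the batch. Amortizing it over the $B$ samples read gives the $\mathcal{O}(n|\Sigma|B)$ total.

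\textbf{Main obstacle.} The hardest step is item 3 of the memory argument: showing that the white region below each fringe state contains only $\mathcal{O}(\lceil B/m_0\rceil)$ nodes in expectation, rather than one node per distinct string in the batch. My first attempt would use a pigeonhole argument. Within one batch, at most $\lceil B/m_0\rceil$ children of any node can reach size $m_0$ and be promoted. Unpromoted white nodes are leaves by the break condition, so they add only one node per path. These leaves should be charged to the parent's promotion budget, which keeps their memory bounded in expectation by the same $\lceil B/m_0\rceil$ factor.
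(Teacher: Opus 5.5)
Your memory count has a genuine gap, and it is in step 2, not step 3. The bounds of at most $n$ red and $n|\Sigma|$ blue states describe the current merged hypothesis. What sits in memory, however, is the unmerged prefix tree. After each minimization every refinement is undone, and no state is ever freed, since states are kept precisely so that merges can be undone. Every node that was red or blue at any point of the minimization stays marked and can receive children in the next batch. Each merge retires a blue node, which remains stored and marked, and the children it passes to the red state become new blue nodes. So the number of expansion points in the stored tree is governed by the number of merges and promotions, not by $n|\Sigma|$, and the nodes below them accumulate across batches. A fringe count of the hypothesis therefore does not bound the states ever held.

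The paper's count is cumulative. Because states are kept after merges, memory is bounded by the number of states ever created: at most $\lceil B/m_0\rceil$ per iteration, times the $n|\Sigma|$ iterations of the preceding corollary, each state carrying $\mathcal{O}(1/\beta)$ sketch memory. The factor $n|\Sigma|$ is thus the iteration count, not the number of blue states. You say you will use the iteration bound but never do so in the memory part. Conversely, the step you call the main obstacle is immediate. The tree is deterministic, so a node has at most $|\Sigma|$ children, and $B\geq 4m_0n|\Sigma|/\epsilon$ gives $|\Sigma|\leq\lceil B/m_0\rceil$. Your charging argument is not needed, and as written it is not a proof.

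The run-time part rests on the same snapshot view. The claim that each minimization performs $\mathcal{O}(n|\Sigma|)$ refinements fails for the reason just given: merges consume blue states and expose new ones, so the number of merges per minimization is bounded only by the size of the stored tree. Likewise, the number of nodes a string updates is the depth of the stored tree, which can grow by one level per batch, so it is not a constant. The paper does not do a per-iteration accounting at all. It argues three things:
\begin{itemize}
\item with probability $1-\delta'$ each merge is correct the first time it is executed;
\item undoing a wrong merge costs at worst re-computing its subtree;
\item all such (re)computations are capped by the total number of states ever created, $n|\Sigma|\lceil B/m_0\rceil$.
\end{itemize}
With constant-time merges and undos this gives $\mathcal{O}(n|\Sigma|B)$. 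This treatment of mistakes and re-computation is the substance of the paper's run-time claim, and it is missing from your proposal.
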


\noindent Unlike the other works we keep states after a merge in order to undo them if need be. Hence, our algorithm's memory consumption is upper bounded by the number of states the algorithm can possibly produce before its termination. The corollary then follows from the fact that the CMS use $\mathcal{O}\left(\frac{1}{\beta}\right)$ memory~\citep{count-min-sketch} for point queries. We know we have an expected maximum of $n|\Sigma|$ iterations, and hence there is a maximum of $n|\Sigma|\left\lceil \frac{B}{m_0} \right\rceil$ possibly created states in total. For the run-time we have to take into account that the algorithm can make mistakes and might have to do re-computations. From before we know that with probability $1-\delta'$ each merge is correct the first time it is executed. Undoing a merge results in a worst case in the re-computation of its entire sub-tree, which is capped by a maximum of $\left(n|\Sigma|\left\lceil \frac{B}{m_0} \right\rceil \right)$ possible merges. Hence $\mathcal{O}\left(n|\Sigma|B\right)$ amortized time.

\vspace{\baselineskip}

\noindent Analogue to~\citep{palmer_pac_paper} and~\citep{balle_2012} we can estimate the transition probabilities and arrive at the following theorem:

\begin{customtheorem}
    Given a target PDFA $A$ and given our streaming strategy along with our merge heuristic. Given the required input parameters $L', n', \mu', \Sigma, \epsilon, \delta$, then for each $L'\geq L$, $\mu' \leq \mu$, and $n' \geq n$ our framework returns a hypothesis automaton $H$ such that with probability at least $1-\delta$ the error is bounded by $L_1(H, A)\leq \epsilon$.  
\end{customtheorem}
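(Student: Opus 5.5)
The plan follows the standard Clark--Thollard/Palmer--Goldberg decomposition of the $L_1$ error: a structural part (the hypothesis graph is isomorphic to the frequent part of $A$) and an estimation part (the transition probabilities on that graph are accurate). We split the confidence budget as $\delta = \delta_1 + \delta_2 + \delta_3$. Here $\delta_1$ covers all statistical tests, $\delta_2$ covers the existence of frequent states and transitions, and $\delta_3$ covers probability estimation. We then set $\delta' := \min(\delta_1,\delta_2)$ and fix the sketch parameters so that $\beta < \mu' \le \mu$. The depth $d$ is chosen so that $2\gamma$ is absorbed into $\delta_2' = 2\gamma\delta'$, and $\alpha \le \delta_2'/(2n'^2|\Sigma|^2(wd)^2F_s)$, as prescribed after the narrowing theorem. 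Because $n'\ge n$ and $\mu'\le\mu$, every bound proven with $n,\mu$ remains valid with the primed quantities, since all thresholds are monotone in them. $L'\ge L$ enters only through the choice $F_s \ge L$ (the length at which $\mu$-distinguishing prefixes occur), so the CMS tests actually see a distinguishing string.

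\textbf{Step 1 (correct merges).} By the preceding corollary there are at most $n|\Sigma|$ iterations, each involving at most $n^2|\Sigma|^2$ candidate pairs in the red--blue frontier. For pairs belonging to the same target state, the narrowed test yields a false reject with probability at most $2\alpha$. For distinct target states, the collision-aware false-accept theorem, combined with Theorem~\ref{theorem:upper_bound_hoeffding} and Theorem~\ref{theorem:hoeffing_test_lower_bound}, bounds the error by $\delta_2'/(2n^2|\Sigma|^2)$, provided both nodes have size at least $m_0$. The threshold $t_S \ge m_0$ guarantees this for every blue node. A union bound then shows that, with probability at least $1-\delta_1$, every merge and promotion performed in the final minimization routine is correct. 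The undo/redo mechanism of Algorithm~\ref{alg:mergestrategy} matters here only in that the \emph{final} hypothesis is produced by a full pass in which every refinement, whether retried from $R_{old}$ or newly chosen, is re-tested on current data. Earlier mistakes are therefore irrelevant: we only need correctness of the last pass, which is where undoing helps us.

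\textbf{Step 2 (structure).} Applying the batch-size theorem (Palmer et al.'s Theorem 2) with $B$ as assumed, we get that with probability at least $1-\delta_2$ every state and transition of weight at least $\epsilon/(4n|\Sigma|)$ outside the exceptional sets $Q',T'$ appears in $H$. Together with Step 1, the hypothesis restricted to these states is isomorphic to the corresponding subgraph of $A$, and the missing mass is at most $\epsilon/2$.

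\textbf{Step 3 (probabilities and $L_1$).} For the retained states, we estimate $\lambda$ and $\eta$ from the counts stored in the merged nodes. These counts are the plain size counters $n_q$ and child sizes, not CMS values, so they are free of collisions. Following \citep{palmer_pac_paper} and \citep{balle_2012}, we use smoothed estimates together with Hoeffding and a union bound over $n|\Sigma|$ transitions with budget $\delta_3$. This gives relative accuracy of order $\epsilon/(n|\Sigma|L)$ per transition, provided each frequent state has collected $\tilde{\mathcal{O}}(n^2|\Sigma|^2L^2/\epsilon^2)$ samples. We secure this by enlarging $B$ (or the final batch) accordingly. Finally, we invoke Palmer et al.'s $L_1$ lemma, which bounds the $L_1$ contribution of the well-estimated frequent part by $\epsilon/2$ using the expected-length bound $L$. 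Adding the $\epsilon/2$ of missing mass gives $L_1(H,A)\le\epsilon$ with probability at least $1-\delta$.

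\textbf{Main obstacle.} The delicate step is Step 1 in the streaming setting. The tests are carried out repeatedly across batches on overlapping, growing data, and the union bound must range over all tests actually performed, including those in $R_{failed}$ re-checks. I would first try to avoid a dependence on the number of batches by conditioning only on the final minimization pass, whose tests use the final counts, and arguing that the number of distinct pairs tested there is at most $n^2|\Sigma|^2$ per iteration. If that fails, I would pay an extra factor of $n|\Sigma|$ in $\alpha$ and $\delta'$, which keeps everything polynomial.
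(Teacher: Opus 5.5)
Your decomposition (merge correctness from the CMS/Hoeffding lemmas, completeness from Palmer et al.'s Theorem~2, then transition estimation) is what the paper's one-line proof defers to. Step~1 fails as written, though, in two places.

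\textbf{The narrowed test.} This flaw is inherited from the paper. You use the narrowing theorem: the test with threshold $\sqrt{\frac{1}{2}\log(2/\alpha)}\,(1/\sqrt{n_1}+1/\sqrt{n_2})-\beta$ supposedly still has false-reject probability at most $2\alpha$. That is false.
\begin{itemize}
\item Shrinking the acceptance threshold can only make rejecting equivalent states more likely.
\item Once both sizes exceed $2\log(2/\alpha)/\beta^2$, the threshold is negative and every pair is rejected.
\item So nodes of frequent states, which gain data with every batch, eventually cannot be merged at all.
\item The paper's own proof of that theorem actually argues about false accepts.
\end{itemize}
The repair is to keep the original threshold and charge $\beta$ only to false accepts. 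For equivalent states the hash functions are shared, so for every queried string $x$
\[
\Bigl|\min_k \tfrac{X_{k,h_k(x)}}{n_1}-\min_k \tfrac{Y_{k,h_k(x)}}{n_2}\Bigr|\le\max_k\Bigl|\tfrac{X_{k,h_k(x)}}{n_1}-\tfrac{Y_{k,h_k(x)}}{n_2}\Bigr|,
\]
and each cell frequency has the same expectation in both sketches. Hoeffding plus a union bound over the $wdF_s$ cells then controls false rejects with no $\beta$ term. For distinct states, $f_i\le\hat f_i\le f_i+\beta$ gives $|\hat f_1(x)-\hat f_2(x)|\ge|f_1(x)-f_2(x)|-\beta$ on the distinguishing string, with probability at least $1-2\gamma$. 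With, say, $\beta\le\mu'/2$, the Cantelli argument runs with $\mu/2$ in place of $\mu$. The CMS failure probability then enters additively, as $\delta'+2\gamma$, not as the product $\delta_2'=2\gamma\delta'$.

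\textbf{Conditioning only on the final pass.} Your primary way of handling streaming does not work. In the tree-streaming routine, nodes are created only below red or blue nodes, and those marks come from the previous pass's hypothesis. The stopping rule ($|H|=n$) is also evaluated on earlier hypotheses. An error in an earlier pass therefore changes which parts of the tree get expanded and how much data each frequent state has. For example, a spurious red state from a false reject can make $|H|=n$ while some frequent target states have no node yet, and the final minimization cannot create nodes. The completeness in your Step~2 is Palmer's inductive argument (one new frequent state or transition per batch), and it needs \emph{every} pass to be correct. So you need your fallback: a union bound over all tests in all (at most $n|\Sigma|$) passes, including the $R_{failed}$ re-checks. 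This is what the paper tacitly assumes when it says each merge is correct the first time it is executed. In that argument undo/redo contributes nothing to the guarantee.

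\textbf{Two smaller fixes.}
\begin{itemize}
\item $L$ is, as in Balle et al., the expected-length bound, not a distinguishing length. The link to $F_s$ is Markov's inequality: $D_q(x\Sigma^*)\le (L+1)/|x|$, counting $\xi$. So a $\mu$-distinguishing prefix has length at most $(L+1)/\mu$, and you should take $F_s\ge (L'+1)/\mu'$.
\item In Step~3, children are created only after their parent turns blue, so child sizes undercount relative to $n_q$. Normalise by the total of the child counts (with $\xi$ as a symbol) rather than by $n_q$.
\end{itemize}
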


\noindent The proof is analogous to~\citep{balle_2012}, ~\citep{palmer_pac_paper}, and \citep{balle_thesis}. Note that in this framework the parameters of the sketches $\alpha, \beta, \gamma$ can be inferred from $\delta$ and $\epsilon$. The framework above this has been shown to be a PAC learner, however it is not efficient due to the following constraint: The run-time of the similarity check is exponential in size of the alphabet $|\Sigma|$. We proposed two solutions to overcome this weakness, i.e. modeling unconditional probabilities and the MinHash-approach. With these however we cannot bound the error arbitrarily small anymore, unless we allow for hashing larger than the original size of the trace, which makes them inefficient again. We also practically deal with this weakness by splitting up the sketches into multiple sketches per state. If the similarity test fails for the 1-gram sketch, we can omit subsequent tests and so on. The theoretical upper worst case run-time still exists however.

\subsection{Making the PAC-learner efficient}

In order to be able to give an efficient PAC-learner we propose a slight alternative. In this version, instead of retrieving the distributions before performing similarity tests, we do test cell by cell. Consider two CMS as two matrixes $A, B \in \mathbb{R}^{m \times n}$. Rather than retrieving the counts for the symbols we perform the statistical tests on entry pairs $(A[1, 1], B[1, 1])$, $(A[1, 2], B[1, 2])$, $(A[2, 1], B[2, 1])$, and so on. In practice we empirically found through simple simulations that this approach does work worse on smaller sample sizes, however it seems to approach the same performance on larger sample sizes, given that the dimensions of the sketches are sufficiently large. This follows intuition given the nature of the Hoeffding-test, in which with large sample sizes the threshold shrinks, and hence it becomes more strict while having more evidence. In this case, all the collisions in the sketches can do is obstruct differences in distributions, which will be more likely to laid bare with increasing dimensions of the sketches, both in depth and width. Additionally, states that behave similar are expected to have similar CMS in all of their entries. The advantage of this approach to compare two states is that now we can compare in time $\mathcal{O}(w\dot d)$, regardless of $F_s$.

\vspace{\baselineskip}

\noindent An example: Say we have two baskets of three possible items. The probabilities of drawing a specific item in basked 1 are $p_1(item 1)=0.5$, $p_1(item 2)=0.2$, and $p_1(item 3)=0.3$, and those of basket 2 are $p_2(item 1)=0.5$, $p_2(item 2)=0.3$, and $p_3(item 3)=0.2$. The can see that two baskets differ in the distributions of item 2 and item 3. We want to test whether the two basket are identical and use CMS with one row for that. Now item 2 and item 3 get mapped into the same cell, hence both distributions under collision become $p(cell 1) = 0.5$ and $p(cell 2) = 0.5$. Hence the real distributions are different, but the sketches hide the differences by mapping them together. In the following we want to bound this type of error. At first we need a model for the expected collisions per row:

\begin{customcorollary}
    Given a sufficiently large alphabet $|\Sigma| >> w$ we can give a bound on the number of collisions $n'$ via $P(n' \leq t) = \sum_{x=0}^t \mathcal{N}(\mu_X, \sigma_X)$, where $\mathcal{N}(\mu_X, \sigma_X)$ is the normal distribution with mean $\mu_X$ and standard deviation $\sigma_X$.
\end{customcorollary}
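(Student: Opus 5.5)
We model the number of collisions in a single row as a count of random hash events and then apply a central limit argument. Fix one row $k$ of a CMS with width $w$ and hash function $h_k$ drawn from a pairwise-independent family. Let $N$ be the number of distinct elements hashed into the row; in the regime $|\Sigma| \gg w$ (more precisely, the number of distinct strings of length up to $F_s$ exceeds $w$ by far) we have $N \gg w$. For each unordered pair $\{i,j\}$ of distinct elements define the indicator $I_{i,j,k}$ exactly as in the proof of Theorem~\ref{theorem:upper_bound_hoeffding}. The number of collisions in the row is
\begin{equation}
n' = \sum_{i<j} I_{i,j,k}.
\end{equation}
By uniform hashing, $\mathbb{E}[I_{i,j,k}] = \frac{1}{w}$, so $\mu_X = \binom{N}{2}\frac{1}{w}$.

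The first step is the variance. By pairwise independence, two indicators $I_{i,j,k}$ and $I_{i',j',k}$ are uncorrelated if the pairs are disjoint. They are also uncorrelated when the pairs share one element, because $P(h(i)=h(j)=h(l)) = \frac{1}{w^2}$ under uniform hashing. Hence
\begin{equation}
\sigma_X^2 = \binom{N}{2}\frac{1}{w}\left(1-\frac{1}{w}\right).
\end{equation}
This is the same Bernoulli variance used in the proof of Theorem~\ref{theorem:false_reject_bound_no_collision}.

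The second step is to argue that $n'$ is approximately normal. I would first try the equivalent bin formulation: $n' = \sum_{b=1}^{w} \binom{L_b}{2}$, where $(L_1,\dots,L_w)$ is the multinomial load vector with $N$ trials and uniform cell probabilities. With $N \gg w$, each $L_b$ is approximately normal by the de Moivre--Laplace theorem. Since $n'$ is a smooth function of these loads, a delta-method argument gives asymptotic normality of $n'$. As an alternative route, one can treat $n'$ as a U-statistic of order two in the i.i.d. hash values and invoke Hoeffding's CLT for U-statistics.

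The third step turns the approximation into the stated form $P(n' \le t) \approx \sum_{x=0}^t \mathcal{N}(\mu_X,\sigma_X)$. Read as a discretized normal CDF, the sum should be understood as integrating the density up to $t + \tfrac12$ with a continuity correction. A union over the $d$ rows is not needed, since the statement is per row.

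The main obstacle is the CLT step. The indicators are not independent: they are only pairwise uncorrelated and are functions of shared hash values. A naive sum-of-independent-variables CLT therefore does not apply. In addition, the degenerate part of the U-statistic must be shown to be negligible, and this requires $N$ to grow relative to $w$. This is exactly where the hypothesis $|\Sigma| \gg w$ enters. If the U-statistic route gets messy, I would fall back on the multinomial-load representation and settle for an asymptotic statement with a Berry--Esseen-type error term. In that case the corollary holds as an approximation that is exact only in the limit.
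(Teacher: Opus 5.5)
There is a genuine gap, and it comes from the random variable you chose, not only from CLT technicalities. You take $n'$ to be the total number of colliding pairs in a row, $n'=\sum_{i<j}I_{i,j,k}=\sum_{b=1}^{w}\binom{L_b}{2}$. This statistic is degenerate at first order. Because $\sum_b L_b=N$ exactly, the linear term of your delta-method expansion, $(N/w-\tfrac12)\sum_b(L_b-N/w)$, vanishes identically. Equivalently, the Hoeffding projection $\mathbb{E}[I_{i,j,k}\mid h_k(i)]=1/w$ is constant, so your U-statistic has no non-degenerate part at all. In fact there is an exact identity
\[
n'=\binom{N}{2}\frac{1}{w}+\frac{N}{2w}\left(X^2-(w-1)\right),\qquad X^2=\sum_{b=1}^{w}\frac{(L_b-N/w)^2}{N/w}.
\]
Here $X^2$ is Pearson's statistic, which converges to $\chi^2_{w-1}$ as $N\to\infty$ with $w$ fixed. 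So in exactly the regime $N\gg w$ you invoke, $n'$ is a shifted and rescaled $\chi^2_{w-1}$, not a normal. For $w=2$ this is fully explicit: $n'=N^2/4-N/2+D^2$ with $D=L_1-N/2$. Hence $n'\ge\mu_X-N/4$ with certainty, whereas $\mathcal{N}(\mu_X,\sigma_X)$ with $\sigma_X\approx N/(2\sqrt2)$ puts probability about $0.02$ below $\mu_X-N/\sqrt2$. The hypothesis $|\Sigma|\gg w$ therefore does not make the degenerate part negligible; it makes it everything. The distance to normality is governed by $w$ and vanishes only as $w\to\infty$, so no Berry--Esseen fallback with an error term in $N$ can rescue this step.

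The paper's one-line argument models the collisions as a Binomial variable with success probability $1/w$ and applies the CLT. The count for which that is actually true is a per-cell one. Under uniform hashing, the load $L_b$ of a fixed cell is $\mathrm{Bin}(N,1/w)$. The number of elements colliding with a fixed element $i$, $\sum_{j\neq i}I_{i,j,k}$, is $\mathrm{Bin}(N-1,1/w)$, since conditionally on $h_k(i)$ these are independent Bernoulli variables. De Moivre--Laplace then applies directly with $\mu_X=N/w$ and $\sigma_X^2=\frac{N}{w}(1-\frac1w)$. Its hypothesis, $N/w\to\infty$, is exactly what $|\Sigma|\gg w$ supplies. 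This per-cell count is also the quantity the corollary feeds into, since $n_c$ is used as a per-cell collision count in $n_c e^{-2m_0(\mu/4n_c)^2}$. Replacing your $n'$ by it removes the need for U-statistics altogether.

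A secondary point concerns your variance step, which says disjoint pairs have uncorrelated indicators ``by pairwise independence''. That requires 4-wise independence, and pairs sharing an element require 3-wise independence in general. For example, in the pairwise-independent family $h(x)=Ax+b$ over $\mathbb{F}_2$, the disjoint pairs $\{x,y\}$ and $\{z,x+y+z\}$ have identical collision indicators. You should state explicitly that you assume fully random hash values, which is what the paper's uniformity assumption amounts to.
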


\noindent This corollary is a direct consequence of the central-limit theorem. $\mu_X$ and $\sigma_X$ can be obtained from the assumption that the hash-functions of the sketches follow a uniform distribution with $p=\frac{1}{w}$, and the collisions can then be modeled via a Binomial distribution. We call the number of collisions corresponding with the desired bound on the probability $n_c$. 

\vspace{\baselineskip}

\noindent Having obtained a bound on the number of collisions, we next need a bound on the error per cell. We can obtain this via the Hoeffding-bound, i.e.

\begin{equation}
    P\left( \left| \frac{D_q(s)}{|q|} - \frac{D(s)}{|q|}\right| \geq \frac{\mu}{4n_c} \right) \leq e^{-2m_0\left( \frac{\mu}{4n_c} \right)^2}.
\end{equation}

\noindent The total bounded error per cell is thus $n_c e^{-2m_0\left( \frac{\mu}{4n_c} \right)^2}$, which we want to satisfy 

\begin{equation}
     n_c e^{-2m_0\left( \frac{\mu}{4n_c} \right)^2} \leq \frac{\delta'}{2|\Sigma|^2 n^2 m_0},
\end{equation}

\noindent assuming the number of futures is $F_s=1$. Following the approach in~\citep{palmer_pac_paper} we do get 

\begin{equation}
    m_0 \geq \frac{n_c n^2|\Sigma|^2}{\left( \frac{\mu}{4n_c} \right)^2\delta'}.
\end{equation}

\noindent With this we are able to give guarantees on the error bound per state, and the run-time of the Hoeffding-test is bounded by $\mathcal{O}(wd)$. The rest of the proof follows analogue to above.

\section{Discussion and conclusion}\label{section:discussion}

We propose a new strategy for streaming state machine learning using Count-Min-Sketches and the ability to undo previous mistakes. We demonstrate the efficacy of our approach on the PAutomaC dataset and show how our algorithm can fall under formal PAC bounds.
%At first we looked at the heuristic individually. 
We found that on this dataset the lookahead improvements start saturating at about $2$ to $3$ steps ahead into the future of each state respectively. We then showed that CSS and the variant CSS-MinHash perform well. We compared with the SpaceSave heuristic and point its subpar performance to the consistency heuristic, as discussed in Section \ref{sec:space_save_discussion}, and we point out the limitation that it does not support the undo-operation. Lastly we compared CSS with Alergia and showed similar performance when we enhance Alergia with the k-tails algorithm. An advantage of CSS here is that Alergia does require the prefix tree to be complete. After that we showed the effectiveness of our new streaming strategy in comparison with the traditional approach. It is clear that the new streaming strategy greatly improved results. It also showed the effectiveness of our CSS heuristic, as can especially be seen in the comparison of Alergia and CSS in the old streaming scenario. This discrepancy does however become smaller in the new streaming strategy. We expect our heuristic to make better choices mainly on the fringes of an incomplete prefix tree. Both streaming strategies greatly improved run-time and memory footprint compared with the batched version. On top of that our second streaming strategy in conjunction with our heuristic has been shown to satisfy formal PAC-bounds. Limitations of our work are the limited size of the dataset, and limitations to our hashing using MinHash, which mitigates but does not yet solve the potential run-time bottleneck completely.

\acks{This work is supported by NWO TTW VIDI project 17541 - Learning state machines from infrequent software traces (LIMIT).}

\bibliography{pmlr-sample}

\end{document}